\newcommand{\tr}[1]{\xrightarrow{#1}}
\newcommand{\tl}[1]{\xleftarrow{#1}}
\newcommand{\maps}{{\colon}}
\def \catC {\mathcal{C}}
\def \subc {\mathcal{A}}
\newcommand{\op}[1]{{#1}^{\scriptscriptstyle op}}
\newcommand{\Span}[1]{\mathsf{Span}(#1)}
\newcommand{\Spanc}[1]{\mathsf{Span}(#1)}
\newcommand{\Cospan}[1]{\mathsf{Cospan}(#1)}
\newcommand{\Rel}[1]{\mathsf{Rel}(#1)}
\newcommand{\Corel}[1]{\mathsf{Corel}(#1)}
\def \PROP {\mathbf{Prop}} % category of PROPs
\def \CAT {\mathbf{Cat}} % category of small categories
\def \SET {\mathsf{Set}} % category of sets
\def \poi {\,\ensuremath{;}\,} % sequential composition
\def \tns {\ensuremath{\oplus}} % monoidal product
\def \id {\mathit{id}} %identity map
\newcommand\pair[1]{\langle #1 \rangle} %pairing of a cartesian product
\newcommand\copair[1]{[#1]} %copairing of a coproduct
\newcommand\ord[1]{\overline{#1}}
\def \CospanToCorel {\Gamma}
\def \SpanToCorel {\Pi}
\newcommand \+[1] {+_{\scriptscriptstyle \lvert #1 \rvert}}% coproduct taking only one set of objects
\newcommand{\fact}[2]{({#1} , {#2})}
\newcommand{\Cepi}{\mathcal{E}} % epi subcategory
\newcommand{\Cmono}{\mathcal{M}} % mono subcategory
\newcommand{\Triv}{\mathbf{1}} % trivial PROP
\newcommand{\F}{\mathsf{F}} % PROP of Functions
\newcommand{\Inj}{\mathsf{In}}  % the prop of injections
\newcommand{\Surj}{\mathsf{Su}} % the prop of surjections
\newcommand{\ER}{\mathsf{ER}}  % prop of equivalence relations
\newcommand{\PER}{\mathsf{PER}}  % prop of partial equiv relations
\def \PF {\mathsf{PF}} % partial functions
\newcommand \Vect[1] {\mathsf{Vect}_{\scriptscriptstyle #1}}
\newcommand \SV[1] {\mathsf{SV}_{\scriptscriptstyle #1}}
\def \PID {\mathsf{R}}
\def \field {\mathsf{k}} %field
\def \frPID {\mathsf{k}} %field of fractions
\def \Z {\mathsf{Z}}
\newcommand{\fmod}[1]{\mathsf{FMod}_{\scriptscriptstyle #1}}
\newcommand{\mfmod}[1]{\mathsf{MFMod}_{\scriptscriptstyle #1}} % split monos module homomorphsims
\newcommand{\pullbacktop}[4]{%
{#1} \ar@/_/[ddr]_{#4} \ar@/^/[drr]^{#2}%
\ar@{.>}[dr]|-{#3} \\}
\newcommand{\pullbackcorner}[1][dr]{\save*!/#1+1.2pc/#1:(1,-1)@^{|-}\restore}
\newcommand{\cgr}[2][scale=0.45]{\raisebox{0.1em}{\begingroup
\setbox0=\hbox{\includegraphics[#1]{graffles/#2}}%
\parbox{\wd0}{\box0}\endgroup}}
\newcommand\Bmult{\lower5pt\hbox{$\includegraphics[width=20pt]{graffles/Bmult.pdf}$}}
\newcommand\Bcomult{\lower5pt\hbox{$\includegraphics[width=20pt]{graffles/Bcomult.pdf}$}}
\newcommand\Bunit{\cgr[height=10pt]{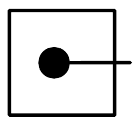}}
\newcommand\Bcounit{\cgr[height=10pt]{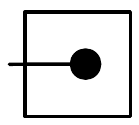}}
\newcommand\EmptyDiag{\cgr[height=10pt]{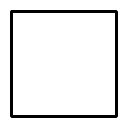}}
\newcommand\IdDiag{\cgr[height=12pt]{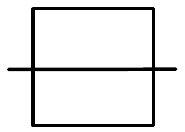}}
\newcommand\scalar{\cgr[height=12pt]{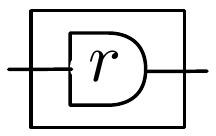}}
\newcommand\scalarop{\cgr[height=12pt]{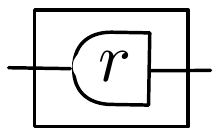}}
\theoremstyle{plain}
\newtheorem{prop}[]{Proposition}
\newtheorem{thm}[]{Theorem}
\newtheorem{lem}[]{Lemma}
\newtheorem{cor}[]{Corollary}
\theoremstyle{definition}
\newtheorem{ass}[]{Assumption}
\newtheorem{defn}[]{Definition}
\newtheorem{rmk}[]{Remark}
\title{A Universal Construction for (Co)Relations\footnote{
Brendan Fong acknowledges support from the Queen Elizabeth Scholarship,
Oxford, and the Basic Research Office of the ASDR\&E through ONR
N00014-16-1-2010.}
  %\footnote{An extended version
%of this paper, with an appendix containing omitted proofs, is available at 
%\href{https://arxiv.org/abs/1703.08247/}{https://arxiv.org/abs/1703.08247/}.
}
\author[1]{Brendan Fong}
\author[2]{Fabio Zanasi}
\affil[1]{University of Pennsylvania, United States of America}
\affil[2]{University College London, United Kingdom}
\authorrunning{B. Fong and F. Zanasi} %mandatory. First: Use abbreviated first/middle names. Second (only in severe cases): Use first author plus 'et. al.'
\subjclass{F.3.2 [Semantics of Programming Languages]: Algebraic approaches to semantics.}% mandatory: Please choose ACM 1998 classifications from http://www.acm.org/about/class/ccs98-html . E.g., cite as "F.1.1 Models of Computation". 
\keywords{corelation, prop, string diagram}% mandatory: Please provide 1-5 keywords
\begin{document}

\maketitle

\begin{abstract}
Calculi of string diagrams are increasingly used to present the syntax and
algebraic structure of various families of circuits, including signal flow
graphs, electrical circuits and quantum processes. In many such approaches, the
semantic interpretation for diagrams is given in terms of relations or
corelations (generalised equivalence relations) of some kind. In this paper we
show how semantic categories of both relations and corelations can be
characterised as colimits of simpler categories. This modular perspective is
important as it simplifies the task of giving a complete axiomatisation for
semantic equivalence of string diagrams. Moreover, our general result unifies
various theorems that are independently found in literature and are relevant for program semantics, quantum computation and control theory.
\end{abstract}

\section{Introduction}

Network-style diagrammatic languages appear in diverse fields as a tool to reason about computational models of various kinds, including signal processing circuits, quantum processes, Bayesian networks and Petri nets, amongst many others. In the last few years, there have been more and more contributions towards a uniform, formal theory of these languages which borrows from the well-established methods of programming language semantics. A significant insight stemming from many such approaches is that a \emph{compositional} analysis of network diagrams, enabling their reduction to elementary components, is more effective when system behaviour is thought as a \emph{relation} instead of a function. 

A paradigmatic case is the one of signal flow graphs, a foundational structure
in control theory: a series of recent
works~\cite{Bonchi2014b,BaezErbele-CategoriesInControl,Bonchi2015,BonchiSZ17,Fong2015}
gives this graphical language a syntax and a semantics where each signal flow
diagram is interpreted as a subspace (a.k.a. linear relation) over streams. The
highlight of this approach is a sound and complete axiomatisation for semantic
equivalence: what is of interest for us is how this result is achieved
in~\cite{Bonchi2014b}, namely through a \emph{modular} account of the domain of
subspaces. The construction can be studied for any field $\field$: one
considers the prop\footnote{A prop is a symmetric monoidal category with
objects the natural numbers \cite{MacLane1965}. It is the typical setting for
studying both the syntax and the semantics of network diagrams.} $\SV{\field}$
whose arrows $n \to m$ are subspaces of $\field^n \times \field^m$, composed as
relations. As shown in in~\cite{interactinghopf,ZanasiThesis}, $\SV{\field}$
enjoys a universal characterisation: it is the pushout (in the category of
props) of props of \emph{spans} and of \emph{cospans} over $\Vect{\field}$, the prop
with arrows $n \to m$ the linear maps $\field^n \to \field^m$:
\begin{equation}\label{intro:linearcube}
\begin{aligned}
    \xymatrix@R=15pt{
    {\Vect{\field} + \op{\Vect{k}}} \ar[r]
    \ar[d]& {\Span{\Vect{\field}}}
    \ar[d] \\
    {\Cospan{\Vect{\field}}} \ar[r] & {\SV{\field}}. \pullbackcorner
    }
  \end{aligned}
\end{equation}
In linear algebraic terms, the two factorisation properties expressed by
\eqref{intro:linearcube} correspond to the representation of a subspace in terms
of a basis (span) and the solution set of a system of linear equations (cospan).
Most importantly, this picture provides a roadmap towards a complete
axiomatisation for $\SV{\field}$: one starts from the domain $\Vect{\field}$ of
linear maps, which is axiomatised by the equations of Hopf algebras, then combines
it with its opposite $\op{\Vect{\field}}$ via two distributive laws of
props~\cite{Lack2004a},  one yielding an axiomatisation for
$\Span{\Vect{\field}}$ and the other one for $\Cospan{\Vect{\field}}$. Finally,
merging these two axiomatisations yields a complete axiomatisation for $\SV{\field}$,
called the theory of interacting Hopf algebras~\cite{interactinghopf,ZanasiThesis}.

\medskip

It was soon realised that this modular construction was of independent interest, and perhaps evidence of a more general phenomenon. In~\cite{Zanasi16} it is shown that a similar construction could be used to characterise the prop $\ER$ of equivalence relations, using as ingredients $\Inj$, the prop of injections, and $\F$, the prop of total functions. The same result is possible by replacing equivalence relations with partial equivalence relations and functions with partial functions, forming a prop $\PF$. In both cases, the universal construction yields a privileged route to a complete axiomatisation, of $\ER$ and of $\PER$ respectively~\cite{Zanasi16}. 
\begin{equation}\label{intro:cartesiancubes}
\begin{aligned}
    \xymatrix@R=15pt{
    {\Inj + \op{\Inj}} \ar[r]
    \ar[d]& {\Span{\Inj}}
    \ar[d] \\
    {\Cospan{\F}} \ar[r] & {\ER} \pullbackcorner
    }
    \qquad \qquad
    \xymatrix@R=15pt{
    {\Inj + \op{\Inj}} \ar[r]
    \ar[d]& {\Span{\Inj}}
    \ar[d] \\
    {\Cospan{\PF}} \ar[r] & {\PER} \pullbackcorner.
    }
  \end{aligned}
\end{equation}
Even though a pattern emerges, it is certainly non-trivial: for instance, if one naively mimics the linear case~\eqref{intro:linearcube} in the attempt of characterising the prop of relations, the construction collapses to the terminal prop $\Triv$.
\begin{equation}\label{intro:cartesiancubecollapses}
\begin{aligned}
    \xymatrix@R=15pt{
    {\F + \op{\F}} \ar[r]
    \ar[d] & {\Span{\F}} \ar[d]^{} \\
    {\Cospan{\F}} \ar[r]_-{} & {\Triv}\pullbackcorner
    }
  \end{aligned}
\end{equation}

More or less at the same time, diagrammatic languages for various families of
circuits, including linear time-invariant dynamical systems \cite{Fong2015},
were analysed using so-called \emph{corelations}, which are generalised
equivalence relations~\cite{Fon16,CF,Fon17,Fong2013}. Even though they were not
originally thought of as arising from a universal construction like the examples above, corelations still follow a modular recipe, as they are expressible as a quotient of $\Cospan{\catC}$, for some prop $\catC$. Thus by analogy we can think of them as yielding one half of the diagram
\begin{equation}\label{intro:corelations}
\begin{aligned}
    \xymatrix@R=15pt{
    {\catC + \op{\catC}}
    \ar[d] &  \\
    {\Cospan{\catC}} \ar[r]_-{} & {\Corel{\catC}}.
    }
  \end{aligned}
\end{equation}
In this paper we clarify the situation by giving a unifying perspective for all these constructions. We prove a general result, which
\begin{itemize}
\item implies \eqref{intro:linearcube} and \eqref{intro:cartesiancubes} as special cases;
\item explains the failure of \eqref{intro:cartesiancubecollapses};
\item extends \eqref{intro:corelations} to a pushout recipe for corelations.
\end{itemize}
More precisely, our theorem individuates sufficient conditions for characterising the category $\Rel{\catC}$ of $\catC$-relations as a pushout. A dual construction yields the category $\Corel{\catC}$ of $\catC$-corelations as a pushout. For the case of interest when $\catC$ is a prop, the two constructions look as follows.
\begin{equation}
\begin{aligned}
    \xymatrix@R=15pt{
      {\subc + \op{\subc}} \ar[r] \ar[d] & {\Spanc{\catC}}
      \ar[d] \\
    {\Cospan{\subc}} \ar[r] & {\Rel{\catC}} \pullbackcorner.
    }
 \qquad \qquad 
    \xymatrix@R=15pt{
      {\subc + \op{\subc}} \ar[r] \ar[d] & {\Spanc{\subc}}
      \ar[d] \\
    {\Cospan{\catC}} \ar[r] & {\Corel{\catC}} \pullbackcorner.
    }
  \end{aligned}
\end{equation}
The variant ingredient $\subc$ is a subcategory of $\catC$. In order to make the constructions possible, $\subc$ has to satisfy certain requirements in relation with the factorisation system $(\Cepi, \Cmono)$ on $\catC$ which defines $\catC$-relations (as jointly-in-$\Cmono$ spans) and $\catC$-corelations (as jointly-in-$\Cepi$ cospans). For instance, taking $\subc$ to be $\catC$ itself succeeds in \eqref{intro:linearcube} (and in fact, for any abelian $\catC$), but fails in \eqref{intro:cartesiancubecollapses}. 

Besides explaining existing constructions, our result opens the lead for new applications. In particular, we observe that under mild conditions the construction lifts to the category $\catC^T$ of $T$-algebras for a monad $T \colon \catC \to \catC$. We leave the exploration of this and other ramifications for future work.

\paragraph*{Synopsis} 
Section~\ref{sec:corelations} introduces the necessary preliminaries about
factorisation systems and (co)relations, and shows the subtleties of mapping
spans into corelations in a functorial way.  Section~\ref{sec:theorem} states
our main result and some of its consequences.  We first formulate the
construction for categories (Theorem~\ref{thm:corelations}), and then for props
(Theorem~\ref{thm:corelationsPROPs}), which are our prime object of interest in
applications. Section~\ref{sec:examples} is devoted to show various instances of
our construction. We illustrate the case of equivalence relations, of partial
equivalence relations, of subspaces, of linear corelations, and finally of
relations of algebras. Finally, Section~\ref{sec:conclusion} summarises our
contribution and looks forward to further work. An appendix contains the proofs
of Theorems~\ref{thm:corelations} and \ref{thm:corelationsPROPs}. 

\paragraph*{Conventions} 
%\medskip \textbf{Conventions} 
We write $f \poi g$ for composition of $f \colon X \to Y$ and $g \colon Y \to Z$ in a category $\catC$. It will be sometimes convenient to indicate an arrow $f \colon X \to Y$ of $\catC$ as $X\tr{f \in \catC}Y$ or also $\tr{\in \catC}$, if names are immaterial. Also, we write $X \tl{f \in \catC} Y$ for an arrow $X \tr{f \in \op{\catC}} Y$. We use $\tns$ for the monoidal product in a monoidal category, with unit object $I$. Monoidal categories and functors will be strict when not stated otherwise.

\section{(Co)relations} \label{sec:corelations}
In this section we review the categorical approach to relations, based on the
observation that in $\SET$ they are the jointly mono spans. We introduce in
parallel the dual notion, called corelations \cite{Fon16}: these are jointly epi
cospans and can be seen as an abstraction of the concept of equivalence
relation.

\begin{defn}
  A {\bf factorisation system} $(\Cepi,\Cmono)$ in a category
  $\mathcal C$ comprises subcategories $\Cepi$, $\Cmono$ of $\mathcal
  C$ such that
  \begin{enumerate}
    \item $\Cepi$ and $\Cmono$ contain all isomorphisms of $\mathcal
      C$.
    \item  every morphism $f \in \mathcal C$ admits a factorisation $f=e;m$, $e \in \Cepi$, $m \in \Cmono$.
\item given $f,f'$, with factorisations $f = e;m$, $f' = e';m'$ of the above
  sort, for every $u$, $v$ such that $f;v = u;f'$
  there exists a unique $s$ making the following diagram commute.
    \[
    \xymatrixcolsep{2pc}
    \xymatrixrowsep{2pc}
    \xymatrix{
      \ar[r]^e \ar[d]_u & \ar[r]^m \ar@{-->}[d]^{\exists! s} &  \ar[d]^v \\
       \ar[r]_{e'}& \ar[r]_{m'} &
    }
  \]
  \end{enumerate}
\end{defn}

\begin{defn}
  Given a category $\catC$, we say that a subcategory $\subc$ is {\bf stable
  under pushout} if for every pushout square 
  \[
    \xymatrix{
      \ar[r]^a \ar[d] & \ar[d]\\
      \ar[r]^f & {}%\pullbackcorner
    }
  \]
  such that $a \in \subc$, we also have that $f \in \subc$. Similarly, we say
  that $\subc$ is {\bf stable under pullback} if for every pullback square
  labelled as above $f \in \subc$ implies $a \in \subc$.

  A factorisation system $(\Cepi,\Cmono)$ is {\bf stable} if $\Cepi$ is stable
  under pullback, {\bf costable} if $\Cmono$ is stable under pushout, and {\bf
  bistable} if it is both stable and costable.
\end{defn}

Examples of bistable factorisation systems include the trivial factorisation
systems $(\mathcal I_{\catC},\catC)$ and $(\catC,\mathcal I_{\catC})$ in any category
$\catC$, where $\mathcal I_{\catC}$ is the subcategory containing exactly the
isomorphisms in $\catC$, the epi-mono factorisation system in any topos, or the
epi-mono factorisation system in any abelian category. Stable factorisation
systems include the (regular epi, mono) factorisation system in any regular
category, such as any category monadic over $\SET$. Dually, costable
factorisation systems include the (epi, regular mono) factorisation system in any
coregular category, such as the category of topological spaces and continuous
maps.

\begin{defn}~
\begin{itemize}
\item Given a category $\catC$ with pushouts, the category $\Cospan{\catC}$ has the same objects as $\catC$ and arrows $X \to Y$ isomorphism classes of cospans $X \tr{f}  \tl{g} Y$ in $\catC$. The composite of $X \tr{f}  \tl{g} Y$ and $Y \tr{h}  \tl{i} Z$ is obtained by taking the pushout of $\tl{g} \tr{h}$.
\item Given a category $\catC$ with pullbacks, the category $\Span{\catC}$ has the same objects as $\catC$ and arrows $X \to Y$ isomorphism classes of spans $X \tl{f}  \tr{g} Y$ in $\catC$. The composite of $X \tl{f}  \tr{g} Y$ and $Y \tl{h}  \tr{i} Z$ is obtained by taking the pullback of $\tr{g}\tl{h}$.
\end{itemize}
\end{defn}

When $\catC$ also has a (co)stable factorisation system, we may define a category
of (co)relations with respect to this system.

\begin{defn}~\label{def:corel}
  \begin{itemize}
    \item Given a category $\catC$ with pushouts and a costable factorisation system
   $(\Cepi,\Cmono)$, the category $\Corel{\catC}$ has the same objects as $\catC$.
   The arrows $X \to Y$ are equivalence classes of cospans $X \tr{f} N \tl{g} Y$ under
   the symmetric, transitive closure of the following relation: two cospans $X \tr{f} N
   \tl{g} Y$ and $X \tr{f'} N' \tl{g'} Y$ are related if there exists $N
   \tr{m}
   N'$ in $\Cmono$ such that
  \begin{equation}\label{eq:defcorelations}
    \begin{aligned}
    \xymatrix@R=5pt{
      & N   \ar[dd]^m  \\
      X \ar[ur]^{f} \ar[dr]_{f'}&& Y \ar[ul]_{g}\ar[dl]^{g'}\\
      & N'  
    }
  \end{aligned}
  \end{equation}
  commutes. This notion of equivalence respects composition of cospans, and so
  $\Corel{\catC}$ is indeed a category. We call the morphisms in this category
  {\bf corelations}. 
  
\item Given a category $\catC$ with pullbacks and a stable factorisation
   system, we can dualise the above to define the category
   $\Rel{\catC}$ of {\bf relations}. 
\end{itemize}
  \end{defn}

(Co)stability is needed in order to ensure that composition of (co)relations is
associative, \emph{cf.}~\cite[\S 3.3]{Fon16}. For proofs it is convenient to
give an alternative description of (co)relations.

\begin{prop} \label{lemma:charCospanToCorel}
When $\catC$ has binary coproducts, corelations are in one-to-one
  correspondence with isomorphism classes of cospans such that the
  copairing $\copair{p,q} \maps X + Y \to N$ lies in $\Cepi$. 
  
When $\catC$ has binary products, relations are in one-to-one
  correspondence with isomorphism classes of spans such that the
  pairing $\pair{f,g} \maps N \to X \times Y$ lies in $\Cmono$. 
\end{prop}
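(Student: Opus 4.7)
The plan is to prove the corelation statement and note that the relation case follows by duality. I exhibit a bijection by noting that any $\Cepi$-copairing cospan represents a corelation via the quotient map, and showing that this assignment descends to a well-defined bijection between isomorphism classes of $\Cepi$-copairing cospans and corelations. This amounts to proving two claims: (i) every corelation has an $\Cepi$-copairing representative, and (ii) any two $\Cepi$-copairing representatives of the same corelation are isomorphic as cospans.

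For (i), given a cospan $X \tr{f} N \tl{g} Y$, apply the $(\Cepi,\Cmono)$-factorisation to its copairing: write $\copair{f,g} = e \poi m$ with $e \maps X+Y \to N'$ in $\Cepi$ and $m \maps N' \to N$ in $\Cmono$. Setting $f' = \iota_X \poi e$ and $g' = \iota_Y \poi e$, one obtains a cospan with $\copair{f',g'} = e \in \Cepi$, and $m$ witnesses $(f',g') \sim (f,g)$ in the sense of Definition~\ref{def:corel}.

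For (ii), the core observation is that if two cospans $(f_0,g_0)$ and $(f_1,g_1)$ are related in one step by some $m \maps N_0 \to N_1$ in $\Cmono$, so that $\copair{f_0,g_0}\poi m = \copair{f_1,g_1}$, then factoring $\copair{f_i,g_i} = e_i \poi n_i$ for $i = 0,1$ yields two $(\Cepi,\Cmono)$-factorisations of $\copair{f_1,g_1}$, namely $e_0 \poi (n_0 \poi m)$ and $e_1 \poi n_1$. The orthogonality axiom (applied with $u,v$ both identities) produces a unique iso $s$ between the intermediate objects with $e_0 \poi s = e_1$. Crucially, $e_0 \poi s = e_1$ implies $s$ intertwines the structural legs $\iota_X \poi e_0, \iota_Y \poi e_0$ with those of the second cospan, so $s$ is a cospan isomorphism between the reductions. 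Since both $\Cepi$-copairing cospans coincide (up to iso) with their own reductions, a chain of such steps composes to a single iso. Symmetric steps in the zigzag are handled identically, and composing the isos along the zigzag gives the desired cospan isomorphism.

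The main obstacle is precisely this last bookkeeping: one needs to verify that the isomorphism produced by uniqueness of factorisation actually lives in the category of cospans under $X+Y$, rather than merely being an abstract iso of objects. This is why one works with the copairing $\copair{f,g}$ instead of the individual legs $f$ and $g$: the single condition $e_0 \poi s = e_1$ simultaneously encodes compatibility with both $\iota_X \poi e_i$ and $\iota_Y \poi e_i$. The relations case proceeds dually, factoring the pairing $\pair{f,g}\maps N \to X \times Y$ and exploiting the stable factorisation system in $\op{\catC}$.
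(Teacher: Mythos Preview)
Your proposal is correct and follows essentially the same approach as the paper's own proof. Both arguments factor the (co)pairing through the $(\Cepi,\Cmono)$ system, then use essential uniqueness of factorisations to show that the $\Cepi$-part (dually, the $\Cmono$-part) is invariant along each step of the zigzag defining the equivalence; the only cosmetic difference is that the paper phrases this as ``same $\Cmono$-part of $\langle f,g\rangle$ is a complete invariant'' rather than your ``$\Cepi$-reduction defines a bijection'', and works on the relations side first.
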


We refer to Appendix~\ref{sec:minorproof} for a proof of the proposition.

We call a span $\tl{f}\tr{g}$ {\bf jointly-in-$\Cmono$} if the pairing $\langle
f,g\rangle$ lies in $\Cmono$, and analogously for $\Cepi$ and for cospans. To
each relation there is thus, up to isomorphism, a canonical representation as a
jointly-in-$\Cmono$ span, and similarly to each corelation a jointly-in-$\Cepi$
cospan.

\begin{example} \label{ex.corels}
  Many examples of relations and corelations are already familiar.
  \begin{itemize}
    \item The category $\SET$ is bicomplete and has a bistable epi-mono
      factorisation system. Relations with respect to this factorisation system
      are simply the usual binary relations, while corelations from $X \to Y$ in
      $\SET$ are surjective functions $X+Y \to N$; thus their isomorphism
      classes---the arrows of $\Corel{\SET}$---are partitions, or equivalence relations on $X+Y$. 
    \item The category of vector spaces over a field $\field$ is abelian, and hence bicomplete with a
      bistable epi-mono factorisation system. The categories of
      relations and corelations are isomorphic: a morphism $X \to Y$ in these
      categories can be thought of as a linear relations, i.e. a subspace of $X \times Y$.  
       \item In any category $\catC$ the trivial morphism-isomorphism factorisation
      system $(\catC,\mathcal I_{\catC})$ is bistable. Relations with respect to $(\catC,\mathcal I_{\catC})$ are equivalence classes of isomorphisms $N
      \stackrel\sim\to X \times Y$, and hence there is a unique relation between
      any two objects. Corelations are just cospans.
    \item Dually, relations with respect to the isomorphism-morphism
      factorisation $(\mathcal I_{\catC}, \catC)$ are just spans, and there is
      a unique corelation between any two objects.  
        \end{itemize}
\end{example}

We now study the functorial interpretation of cospans and spans as
corelations. This discussion is instrumental in our universal construction for
corelations (Theorem \ref{thm:corelations}).

First, given two categories with the same collections of objects, we may speak
of {\bf identity-on-objects (ioo)} functors between them, i.e. functors that are
the identity map on objects. Four examples of such functors will become relevant in the next section:
\begin{equation}\label{eq:functorsCToSpanCospan}
\begin{aligned}
\catC \to \Cospan{\catC} \text{ maps } \tr{f} \text{ to } \tr{f}\tl{id} \qquad &
\qquad \phantom{\scriptstyle op}
\catC \to \Span{\catC} \text{ maps } \tr{f} \text{ to } \tl{id}\tr{f}\\
 \op{\catC} \to \Cospan{\catC} \text{ maps }  \tl{g } \text{ to } \tr{id}\tl{g}
 \qquad & \qquad
 \op{\catC} \to \Span{\catC} \text{ maps }  \tl{g } \text{ to } \tl{g}\tr{id}
 \end{aligned}
\end{equation}

We are now ready to discuss the canonical map from cospans to corelations. This
is simple: one just interprets a cospan representative as its corelation
equivalence class.

\begin{defn}~\label{def:cospantocorel}
Let $\catC$ be a category equipped with a costable factorisation system
  $(\Cepi,\Cmono)$. We define $\CospanToCorel \colon \Cospan{\catC} \to
  \Corel{\catC}$ as the ioo functor mapping the isomorphism class of cospans
  represented by $X\tr{f}N\tl{g}Y$ to the corelation represented by this cospan.
\end{defn}

It is straightforward to check that this is well-defined. Moreover, 
\begin{prop}\label{prop:CospanToCorelFull} $\CospanToCorel \colon \Cospan{\catC} \to \Corel{\catC}$ is full. \end{prop}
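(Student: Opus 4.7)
The plan is to observe that fullness here is essentially an unravelling of definitions. Since $\CospanToCorel$ is identity on objects, it suffices to show that for any $X, Y$, every corelation $X \to Y$ arises as the image of some cospan. By Definition \ref{def:corel}, a corelation is by construction an equivalence class of cospans $X \tr{f} N \tl{g} Y$, generated by the relation imposing equivalence under a $\Cmono$-morphism $N \to N'$. So, given a corelation $\rho \colon X \to Y$, I would pick any representative cospan $X \tr{f} N \tl{g} Y$ of $\rho$; then by Definition \ref{def:cospantocorel}, $\CospanToCorel$ sends the isomorphism class of this cospan to the corelation it represents, which is exactly $\rho$.

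The only point to check is that $\CospanToCorel$ is well-defined on isomorphism classes of cospans, i.e. that isomorphic cospans determine the same corelation. This uses condition (1) of the factorisation system definition, namely that every isomorphism of $\catC$ lies in $\Cmono$: an isomorphism $N \to N'$ witnessing equivalence of cospans in $\Cospan{\catC}$ is in particular a $\Cmono$-morphism witnessing their relation in the sense of \eqref{eq:defcorelations}. Hence the assignment descends to isomorphism classes, and the surjectivity argument above yields fullness. I do not foresee any genuine obstacle; the work is entirely in chasing definitions, and no property of the factorisation system beyond isomorphisms lying in $\Cmono$ is needed.
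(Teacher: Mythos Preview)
Your proposal is correct and follows essentially the same approach as the paper: pick any representative cospan of a given corelation and observe that $\CospanToCorel$ sends it back to that corelation. The paper's proof is even terser, having already dispatched well-definedness in the sentence preceding the proposition; your extra remark about isomorphisms lying in $\Cmono$ is accurate but not needed for fullness itself.
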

\begin{proof}
Let $a$ be a corelation. Then choosing some representative $X \to N \leftarrow
Y$ of $a$ gives a cospan whose $\CospanToCorel$-image is $a$.
\end{proof}

Mapping spans to corelations is subtler. Given a span, we may obtain a cospan by
taking its pushout. When $\catC$ has pushouts and pullbacks, this defines a
function on morphisms $\Span{\catC} \to \Cospan{\catC}$. This function is
rarely, however, a functor: it may fail to preserve composition. To turn it into
a functor, two tweaks are needed: first, we restrict to a subcategory
$\Span\subc$ of $\Span{\catC}$, for some carefully chosen subcategory $\subc
\subseteq \catC$, and second, we take the jointly-in-$\Cepi$ part of the
pushout. We call the resulting functor $\SpanToCorel$, as it takes the
\emph{pushout} and then \emph{projects}.

How do we choose $\subc$? Given a cospan $X \to A \leftarrow Y$, we may take its
pullback to obtain a span $X \leftarrow P \to Y$, and then pushout this span in
$\catC$ to obtain a cospan $X \to Q \leftarrow Y$. This gives a diagram
\[
  \xymatrix@R=10pt@C=40pt{
    & X \ar[dr] \ar@/^1em/[drr] \\
    P \ar[ur] \ar[dr] && Q \ar@{-->}[r]^f & A \\
    & Y \ar[ur] \ar@/_1em/[urr]
  }
\]
where the map $f$ exists and is unique by the universal property of the pushout.
We want this map $f$ to lie in $\Cmono$: by Definition~\ref{def:corel}, this implies that $X \to A \leftarrow
Y$ and $X \to Q \leftarrow Y$ represent the same corelation. This condition is
reminiscent of that introduced by Meisen in her work on so-called categories of
pullback spans \cite{Mei74}.

Note that this pullback and pushout take place in $\catC$. We nonetheless
ask $\subc$ to be closed under pullback, so spans $\tl{f \in \subc}\tr{g\in \subc}$ do indeed form a subcategory $\Span\subc$ of $\Span\catC$.\footnote{Calling this subcategory $\Span\subc$ is a slight abuse of notation: it may be
the case that $\subc$ itself has pullbacks, and we have not proved that these
agree with pullbacks in $\catC$. Nonetheless, this conflict does not cause
trouble in any of our examples below, and we stick to this convention for
notational simplicity.}

\begin{prop}~\label{prop:spantocorel}
  Let $\catC$ be a category equipped with a costable factorisation system
  $(\Cepi,\Cmono)$.  Let $\subc$ be a subcategory of $\catC$ containing all
  isomorphisms and stable under pullback.  Further suppose that the canonical
  map given by the pushout of the pullback of a cospan in $\subc$ lies in
  $\Cmono$. Then mapping a span in $\subc$ to the jointly-in-$\Cepi$ part of its
  pushout cospan defines an ioo functor $\Pi\maps \Span{\subc} \to \Corel{\catC}$.
\end{prop}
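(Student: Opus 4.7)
The plan is to verify the three standard ingredients of a functor: well-definedness on isomorphism classes of spans, preservation of identities, and preservation of composition. Define $\Pi$ on a span $X \tl{f} S \tr{g} Y$ in $\subc$ by forming its pushout $X \tr{f'} P \tl{g'} Y$ in $\catC$, factorising $\copair{f',g'}\maps X + Y \to P$ through $(\Cepi,\Cmono)$ to obtain the jointly-in-$\Cepi$ cospan $X \to N \tl{} Y$, and returning the corelation it represents. Isomorphic spans yield isomorphic pushouts and thus isomorphic $\Cepi$-$\Cmono$ factorisations, so the assignment descends to morphisms in $\Span{\subc}$. Preservation of identities is immediate: the pushout of $\tl{\id} \tr{\id}$ is $X \tr{\id} X \tl{\id} X$, which represents the identity corelation on $X$.

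The real content is preservation of composition. Given composable spans $X \tl{} S \tr{} Y$ and $Y \tl{} T \tr{} Z$ in $\subc$, their composite in $\Span{\subc}$ is $X \tl{} S \times_Y T \tr{} Z$ using the pullback in $\catC$ (which lies in $\subc$ by stability under pullback). Applying $\Pi$ gives $R := X \sqcup_{S \times_Y T} Z$. On the other hand, composing the two corelations $\Pi$ produces amounts to forming the iterated pushout of the zigzag $X \tl{} S \tr{} Y \tl{} T \tr{} Z$, which I will call $Q$. The core of the proof is to exhibit a canonical map $R \to Q$ that lies in $\Cmono$, so by Definition~\ref{def:corel} the cospans $X \to R \tl{} Z$ and $X \to Q \tl{} Z$ represent the same corelation, as required.

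To identify this map, I will use pasting of pushouts. Let $W$ denote the pushout of the pullback span $S \tl{} S \times_Y T \tr{} T$. The legs $S \to Y$ and $T \to Y$ agree on $S \times_Y T$, so they induce a canonical map $W \to Y$; since $S \to Y \tl{} T$ is a cospan in $\subc$ (being the right legs of spans in $\subc$), the hypothesis of the proposition gives $W \to Y \in \Cmono$. Repeated pasting of pushout squares then shows that $R \cong (X \sqcup_S W) \sqcup_W (W \sqcup_T Z)$, i.e.\ $R$ is the pushout of the zigzag $X \tl{} S \tr{} W \tl{} T \tr{} Z$, and that $Q$ decomposes as $Q \cong R \sqcup_W Y$. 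Since $(\Cepi,\Cmono)$ is costable, the pushout $R \to R \sqcup_W Y = Q$ of $W \to Y$ along $W \to R$ lies in $\Cmono$, completing the argument.

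The main obstacle is the composition step: tracking the multi-object colimit that describes the zigzag pushout $Q$ and recognising it as a pushout of $R$ along the canonical map $W \to Y$. Once the diagrammatic manipulation is in place, the hypothesis on cospans in $\subc$ and costability of $(\Cepi,\Cmono)$ do the remaining work. Well-definedness and identity preservation are then routine, and the functor is ioo by construction.
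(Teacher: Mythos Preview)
Your proof is correct and takes a genuinely different route from the paper's. The paper exploits that every span factors as $(\tl{f}\tr{\id})\poi(\tl{\id}\tr{g})$ and checks that $\SpanToCorel$ preserves the four possible composites of such generators; cases (i)--(iii) hold already at the level of cospans, while case (iv), $(\tl{\id}\tr{f})\poi(\tl{g}\tr{\id})$, is \emph{literally} the hypothesis on the pushout of the pullback of a cospan in $\subc$. You instead treat an arbitrary pair of composable spans head-on, organising the colimits by pushout pasting so that the comparison map $R \to Q$ is the pushout of the canonical $W \to Y$ along $W \to R$, and then invoke costability to conclude $R \to Q \in \Cmono$. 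The paper's approach is more economical and pinpoints the hypothesis as one atomic case; yours is self-contained in that it avoids the generator decomposition of $\Span{\subc}$ (and the small bookkeeping needed to promote ``preserves composites of generators'' to full functoriality), and it makes the use of costability explicit rather than hiding it inside the well-definedness of composition in $\Corel{\catC}$.
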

\begin{proof}
  Recall that $\Span{\subc}$ is generated by morphisms of the form
  $\tl{id}\tr{f \in \subc}$ and $\tl{f \in \subc}\tr{id}$. It is thus enough to
  show $\SpanToCorel$ preserves composition on arrows of these two types. There
  exist four cases: (i) $\tl{id}\tr{f}\tl{id}\tr{g}$, (ii)
  $\tl{f}\tr{id}\tl{g}\tr{id}$, (iii) $\tl{f}\tr{id}\tl{id}\tr{g}$, and (iv)
  $\tl{id}\tr{f}\tl{g}\tr{id}$. The first three cases are straightforward to
  prove, and in fact hold when mapping $\Span{\catC} \to \Cospan{\catC}$. It is
  the case (iv) that needs our restriction to $\Span\subc$. There $\SpanToCorel(\tl{id}\tr{f})\SpanToCorel(\tl{g}\tr{id})$
  is represented by the cospan $\tr{f}\tl{g}$, while
  $\SpanToCorel(\tl{id}\tr{f}\tl{g}\tr{id})$ is the represented by the
  pushout $\tr{p}\tl{q}$ of the pullback of $\tr{f}\tl{g}$. But by hypothesis,
  there exists a unique $\tr{m \in \Cmono}$ making the following diagram commute.
  \[
    \xymatrix@C=30pt@R=10pt{
      &  \ar[dr]^p \ar@/^1em/[drr]^f \\
      &&  \ar[r]^m &  \\
      & \ar[ur]_q \ar@/_1em/[urr]_g
    }
  \]
This implies that $\tr{p}\tl{q}$ and $\tr{f}\tl{g}$ represent the
  same corelation, and so $\SpanToCorel$ is functorial.
\end{proof}

For example, if the category $\Cmono$ has pullbacks and these coincide with pullbacks in
$\catC$, then we can take $\subc=\Cmono$. If $\catC$ is abelian, we can take $\subc=\catC$.

\section{Main theorem: a universal property for (co)relations}\label{sec:theorem}

This section states our main result and some consequences. We first fix our
ingredients. 
\begin{ass}\label{ass:thcorelations}
  Let $\catC$ be a category with
  \begin{itemize}
    \item pushouts and pullbacks;
  \item a costable factorisation system $\fact{\Cepi}{\Cmono}$ with $\Cmono$ a subcategory of the
  monos in $\catC$;% and stable under pushout;
  \item a subcategory $\subc$ of $\catC$ containing $\Cmono$, stable under pullback, and
  such that the canonical map given by the pushout of the pullback of a
  cospan in $\subc$ lies in $\Cmono$.
  \end{itemize}
\end{ass}

Building on the results of Section~\ref{sec:corelations}, the second requirement
above allows us to form a category $\Corel{\catC}$ of corelations, whereas the
third yields a functor $\SpanToCorel \colon \Span{\subc} \to \Corel{\catC}$.  We
shall also use the functor $\CospanToCorel \colon \Cospan{\catC} \to
\Corel{\catC}$ (Definition~\ref{def:cospantocorel}) and a category
$\subc\+{\subc}\op{\subc}$: its objects are those of $\subc$ and the morphisms
$X \to Y$ are `zigzags' $X\tr{f}\tl{g}\tr{h} \dots \tl{k} Y$ in $\subc$. There
are ioo functors from $\subc\+{\subc}\op{\subc}$ to $\Cospan{\catC}$ and to
$\Span{\catC}$, defined on morphisms by taking colimits, respectively limits of
zigzags---equivalently, they are defined by pointwise application of the
functors in~\eqref{eq:functorsCToSpanCospan}.
\footnote{More abstractly, one can
  see $\subc\+{\subc}\op{\subc}$ as the pushout of $\subc$ and $\op{\subc}$ over
  the respective inclusions of $\lvert\subc\rvert$, the discrete category on the
  objects of $\subc$. The functors $\Span{\subc} \tl{} \subc\+{\subc}\op{\subc}
  \tr{} \Cospan{\catC}$ are then those given by the universal property with
  respect to (suitable restrictions of) the functors in
  \eqref{eq:functorsCToSpanCospan}.} 
We make all these components interact in our main theorem.

\begin{thm}
 \label{thm:corelations}
   Let $\catC$ and $\subc$ be as in Assumption \ref{ass:thcorelations}. Then the
   following is a pushout in $\CAT$:
\begin{equation}\tag{$\star$}
\label{eq:pushoutCorel}
\begin{aligned}
    \xymatrix@C=40pt{
      {\subc \+{\subc} \op{\subc}} \ar[r] \ar[d] & {\Spanc{\subc}}
      \ar[d]^{\SpanToCorel} \\
    {\Cospan{\catC}} \ar[r]_-{\CospanToCorel} & {\Corel{\catC}} \pullbackcorner
    }
  \end{aligned}
\end{equation}
\end{thm}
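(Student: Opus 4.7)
The plan is to exploit Proposition~\ref{prop:CospanToCorelFull}: since $\CospanToCorel$ is identity-on-objects and full, any candidate $H \maps \Corel{\catC} \to \mathcal{D}$ factoring a given $F \maps \Cospan{\catC} \to \mathcal{D}$ is forced on both objects and morphisms, so uniqueness is immediate and the substantive task is well-definedness. First, I would verify that the outer square commutes. Both legs into $\Corel{\catC}$ preserve composition, so it suffices to check on the single-step generators of $\subc \+{\subc} \op{\subc}$; for a generator $\tr{f \in \subc}$ both routes return the corelation class of the cospan $\tr{f}\tl{\id}$ (and symmetrically for $\tl{f \in \subc}$), where the clause about the pushout of a pullback lying in $\Cmono$ is invoked only when passing through longer zigzags factored via the atomic ones.

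Given a cocone $(F,G)$ over the span $\Cospan{\catC} \tl{} \subc \+{\subc} \op{\subc} \tr{} \Span{\subc}$ with apex $\mathcal{D}$, I would define $H$ on a corelation $c$ by choosing any cospan representative $X \tr{f} N \tl{g} Y$ and setting $H(c) := F(X \tr{f} N \tl{g} Y)$. The main obstacle is showing this is independent of representative. By Definition~\ref{def:corel} it suffices to treat the generating equivalence: two cospans linked by some $m \in \Cmono$ with $f \poi m = f'$ and $g \poi m = g'$. The key lemma I would establish is
\[
  F(N \tr{m} N' \tl{\id} N') \poi F(N' \tr{\id} N' \tl{m} N) = \id_N
\]
in $\mathcal{D}$. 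Indeed, by commutativity of the outer square each factor equals the $G$-image of the corresponding span generator $N \tl{\id} N \tr{m} N'$, respectively $N' \tl{m} N \tr{\id} N$. Since $\Cmono$ sits inside the monos of $\catC$, the pullback of $N \tr{m} N' \tl{m} N$ is trivial, so this composite is the identity in $\Span{\subc}$, whence $G$ sends it to $\id_N$.

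With the lemma in hand, I would decompose
\[
  X \tr{f'} N' \tl{g'} Y = (X \tr{f} N \tl{\id} N) \poi (N \tr{m} N' \tl{\id} N') \poi (N' \tr{\id} N' \tl{m} N) \poi (N \tr{\id} N \tl{g} Y)
\]
in $\Cospan{\catC}$ (each pushout is trivial since it involves an identity leg) and apply $F$: the lemma collapses the middle pair to $\id_N$, leaving $F(X \tr{f} N \tl{g} Y)$, as required. Passing to the symmetric-transitive closure of the $\Cmono$-relation is then automatic, and functoriality of $H$ follows from that of $F$ together with composition-preservation of $\CospanToCorel$. The identity $H \circ \CospanToCorel = F$ holds by construction; to verify $H \circ \SpanToCorel = G$, I would factor an arbitrary span as $(\tl{f}\tr{\id}) \poi (\tl{\id}\tr{g})$, apply outer-square commutativity to each atomic factor, and recompose in $\Cospan{\catC}$, obtaining precisely $F$ of the pushout cospan---whose corelation class is unchanged by the jointly-in-$\Cepi$ projection built into $\SpanToCorel$.
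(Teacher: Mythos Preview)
Your proposal is correct and follows essentially the same route as the paper's proof: verify commutativity on generators, define the mediating functor via $F$ on cospan representatives, and reduce well-definedness to the identity $F(\tr{m}\tl{m}) = \id$ for $m \in \Cmono$, proved by transporting through the cocone to $G$ of the trivial pullback span. The only cosmetic difference is that the paper appeals to Proposition~\ref{lemma:charCospanToCorel} to pick the canonical jointly-in-$\Cepi$ representative, whereas you work directly with the generating $\Cmono$-relation of Definition~\ref{def:corel}; the key computation (your lemma, the paper's step $(\clubsuit)$) is identical.
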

We leave a complete proof of this theorem to Appendix~\ref{sec:proof}. In a
nutshell, the key point is that, in light of~\eqref{eq:defcorelations},
$\Corel{\catC}$ differs from $\Cospan{\catC}$ precisely because it has the extra
equations $\tr{m}\tl{m} = \tr{\id}\tl{\id}$, with $\tr{m} \in \Cmono$. But these
equations arise by pullback squares in $\subc$, and so are equations of zigzags
in $\Span{\subc}$ (\emph{cf.}\ \eqref{eq:quotientPb} below). Moreover, the
remaining equations of $\Span{\subc}$ can be generated by using these together
with a subset of the equations of $\Cospan{\catC}$. Hence adding the equations
of $\Span{\subc}$ to those of $\Cospan{\catC}$ gives precisely $\Corel{\catC}$,
and we have a pushout square.

\medskip

We now discuss some observations, consequences and examples.

\begin{rmk}\label{rmk:Msuffices}
If any such $\subc$ exists, then we may always take $\subc=\Cmono$ and
the theorem holds. We record the above, more general, theorem as it explains
preliminary results in this direction already in the literature; see the abelian
case and examples for details.
\end{rmk}

Next, we formulate the dual version of the theorem, which yields a characterisation for relations. It is based on a dual version of Assumption \ref{ass:thcorelations}.

\begin{ass}\label{ass:threlations}
  Let $\catC$ be a category with
  \begin{itemize}
    \item pushouts and pullbacks;
  \item a stable factorisation system $\fact{\Cepi}{\Cmono}$ with $\Cepi$ a subcategory of the
  epis in $\catC$; %and stable under pullback;
  \item a subcategory $\subc$ of $\catC$ containing $\Cepi$, stable under pushout, and
  such that the canonical map given by the pullback of the pushout of a
  span in $\subc$ lies in $\Cepi$.
  \end{itemize}
\end{ass}

\begin{cor}[Dual case] \label{cor.dual}
  Let $\catC$ and
  $\subc$ be as in Assumption \ref{ass:threlations}. Then the following is a pushout square in $\CAT$.
  \begin{equation}\tag{$\circ$}\label{eq:pushoutRel}
    \begin{aligned}
      \xymatrix@C=40pt{
	{\subc \+{\subc} \op{\subc}} \ar[r] \ar[d] & {\Cospan{\subc}}
	\ar[d]^{} \\
	{\Span{\catC}} \ar[r]_-{} & {\Rel{\catC}} \pullbackcorner
      }
    \end{aligned}
  \end{equation}
\end{cor}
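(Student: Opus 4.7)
The approach I would take is to derive the corollary from Theorem~\ref{thm:corelations} by applying it to the opposite category $\op{\catC}$. Passing to opposites interchanges spans with cospans, pullbacks with pushouts, and swaps the two halves of a factorisation system, so Assumption~\ref{ass:threlations} on $(\catC, \subc, \Cepi, \Cmono)$ translates directly into Assumption~\ref{ass:thcorelations} on $(\op{\catC}, \op{\subc}, \op{\Cmono}, \op{\Cepi})$.

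Concretely, the first step is to record the natural identifications $\Span{\catC} = \Cospan{\op{\catC}}$ and $\Cospan{\subc} = \Span{\op{\subc}}$, which hold as categories because limits in $\catC$ are colimits in $\op{\catC}$. More delicately, one must verify $\Rel{\catC} = \Corel{\op{\catC}}$: by Proposition~\ref{lemma:charCospanToCorel}, a span $X \tl{f} N \tr{g} Y$ in $\catC$ is jointly-in-$\Cmono$ (via $\langle f,g\rangle\colon N \to X \times Y$) iff the corresponding cospan $X \tr{f} N \tl{g} Y$ in $\op{\catC}$ is jointly-in-$\op{\Cmono}$, and the defining equivalences on such representatives match under this correspondence. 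Next I would verify the hypothesis transfer: the factorisation system $(\op{\Cmono}, \op{\Cepi})$ in $\op{\catC}$ is costable with right class $\op{\Cepi}$ contained in the monos of $\op{\catC}$, while $\op{\subc}$ contains $\op{\Cepi}$, is stable under pullback in $\op{\catC}$, and the pushout of the pullback of a cospan in $\op{\subc}$ produces a canonical map to the apex lying in $\op{\Cepi}$---each clause dualising its counterpart in Assumption~\ref{ass:threlations}.

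Having set this up, I would invoke Theorem~\ref{thm:corelations} to obtain the pushout square \eqref{eq:pushoutCorel} for the data $(\op{\catC}, \op{\subc})$, and then rewrite each vertex via the identifications above, using the evident symmetry $\op{\subc} \+{\op{\subc}} \op{(\op{\subc})} \cong \subc \+{\subc} \op{\subc}$ of the zigzag construction (which is a pushout of $\subc$ and $\op{\subc}$ over the shared discrete category of objects, hence symmetric in its two legs). The final task is to check that the four structural functors agree: the dualised $\SpanToCorel$ of Proposition~\ref{prop:spantocorel} sends a span in $\op{\subc}$ to the jointly-in-$\op{\Cepi}$ part of its $\op{\catC}$-pushout, which translates back to the map sending a cospan in $\subc$ to the jointly-in-$\Cmono$ part of its $\catC$-pullback, precisely as required in~\eqref{eq:pushoutRel}; the dualisation of $\CospanToCorel$ (Definition~\ref{def:cospantocorel}) is analogous. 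The main obstacle, if any, is this last bookkeeping of functors, but in each case it reduces to unwinding the definitions in $\op{\catC}$ and requires no new ideas.
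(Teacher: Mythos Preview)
Your proposal is correct and follows essentially the same approach as the paper: dualise to $\op{\catC}$, observe that Assumption~\ref{ass:threlations} becomes Assumption~\ref{ass:thcorelations} for $(\op{\catC},\op{\subc})$ with factorisation system $(\op{\Cmono},\op{\Cepi})$, record the identifications $\Span{\catC}=\Cospan{\op{\catC}}$, $\Cospan{\subc}=\Span{\op{\subc}}$, $\Rel{\catC}=\Corel{\op{\catC}}$, $\subc\+{\subc}\op{\subc}=\op{\subc}\+{\subc}\op{(\op{\subc})}$, and apply Theorem~\ref{thm:corelations}. Your write-up is in fact more explicit than the paper's, which simply lists these identifications and invokes the theorem without spelling out the functor-level bookkeeping.
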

\begin{proof}
  This corollary is obtained by noting that, given a stable factorisation system
  $(\Cepi,\Cmono)$ in $\catC$, with $\Cepi$ a subcategory of the epis, we have a
  costable factorisation system $(\op{\Cmono},\op{\Cepi})$ in $\op{\catC}$, with
  $\op\Cepi$ a subcategory of the monos. Proposition \ref{prop:spantocorel} then
  gives a functor $\Cospan{\subc}= \Span{\op\subc} \longrightarrow \Rel{\catC} =
  \Corel{\op\catC}$. Noting also that $\subc \+{\subc} \op{\subc} = \op\subc
  \+{\subc} \op{(\op\subc)}$ and $\Span{\catC} = \Cospan{\op\catC}$, we can
  hence apply Theorem~\ref{thm:corelations}.
\end{proof}

As a notable instance of Theorem~\ref{thm:corelations}, we can specialise to the case of abelian
categories and their epi-mono factorisation system. In this case we can simply
pick $\subc$ to be $\catC$ itself.

\begin{cor}[Abelian case]
 \label{thm:pushoutAbelian}
  Let $\catC$ be an abelian category. Then the following is a pushout square in
  $\CAT$:
\begin{equation}\tag{$\triangle$}
\label{eq:pushoutAbelian}
\begin{aligned}
    \xymatrix@C=40pt{
    {\catC \+{\catC} \op{\catC}} \ar[r]
    \ar[d] & {\Span{\catC}}
    \ar[d]^{\SpanToCorel} \\
    {\Cospan{\catC}} \ar[r]_-{\CospanToCorel} & {\Corel{\catC}} \cong \Rel{\catC} \pullbackcorner
    }
  \end{aligned}
\end{equation}
  where we take (co)relations with respect the epi-mono factorisation system. 
\end{cor}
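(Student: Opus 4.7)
The plan is to instantiate Theorem~\ref{thm:corelations} with $\subc := \catC$; essentially all the work is then to verify Assumption~\ref{ass:thcorelations}. An abelian category has all finite limits and colimits, and the epi-mono factorisation system is well-known to be bistable with $\Cmono$ the class of all monomorphisms. Choosing $\subc = \catC$ makes the inclusion $\Cmono \subseteq \subc$ and stability of $\subc$ under pullback immediate, so the only non-trivial hypothesis is that for every cospan $X \xrightarrow{f} A \xleftarrow{g} Y$, the canonical map $Q \to A$ from the pushout $Q$ of the pullback $P$ of $(f,g)$ is a monomorphism.

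I would verify this hypothesis by the standard kernel/cokernel identity in an abelian category. Form the difference map $\delta := [f,-g] \colon X \oplus Y \to A$. Since the pullback of a cospan in $\catC$ may be computed as $\ker \delta$, the span $X \leftarrow P \to Y$ is given by post-composing the inclusion $P \hookrightarrow X \oplus Y$ with the biproduct projections. Dually, the pushout $Q$ is, up to the sign automorphism on one summand, the cokernel of this inclusion; by the first isomorphism theorem, $\operatorname{coker}(\ker \delta) \cong \operatorname{im}(\delta)$, so $Q$ is canonically isomorphic to a subobject of $A$ and the induced map $Q \to A$ lies in $\Cmono$. With Assumption~\ref{ass:thcorelations} established, Theorem~\ref{thm:corelations} supplies the pushout square~\eqref{eq:pushoutAbelian}.

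The isomorphism $\Corel{\catC} \cong \Rel{\catC}$ is the folklore fact recalled in Example~\ref{ex.corels}: by Proposition~\ref{lemma:charCospanToCorel}, corelations $X \to Y$ are quotients of $X \oplus Y$ by $\Cepi$-morphisms and relations are $\Cmono$-subobjects of $X \times Y$; since biproducts coincide with products in $\catC$, the kernel/cokernel correspondence between subobjects and quotients gives a bijection on hom-sets, and a short check via the image factorisation confirms that this bijection is functorial. I expect the main obstacle to be the image-equals-pushout-of-pullback step in the previous paragraph, but this is a one-line application of the first isomorphism theorem once the pullback and pushout are identified with a kernel and a cokernel of the difference map.
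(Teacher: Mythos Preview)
Your verification of Assumption~\ref{ass:thcorelations} with $\subc=\catC$ is correct and essentially identical to the paper's: both identify the pullback of a cospan with the kernel of the difference map $[f,-g]\colon X\oplus Y\to A$, the pushout with its cokernel, and the comparison map with the image inclusion.

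Where you diverge is in establishing $\Corel{\catC}\cong\Rel{\catC}$. You argue directly, using the subobject/quotient correspondence in an abelian category and leaving functoriality as a ``short check.'' The paper instead verifies the dual Assumption~\ref{ass:threlations} as well (the map from a span to the pullback of its pushout is epi, by the dual image argument) and then applies Corollary~\ref{cor.dual}. Since with $\subc=\catC$ the span $\Cospan{\catC}\leftarrow \catC\+{\catC}\op{\catC}\to\Span{\catC}$ is the \emph{same} in both Theorem~\ref{thm:corelations} and Corollary~\ref{cor.dual}, uniqueness of pushouts immediately gives $\Corel{\catC}\cong\Rel{\catC}$, and moreover identifies the isomorphism explicitly as ``take the jointly-mono part of the pullback span'' in one direction. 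This buys you functoriality for free and avoids a by-hand verification; your route is perfectly sound but trades the small extra work of checking the dual assumption for the larger work of checking compatibility with composition directly.
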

\begin{proof}
  As $\catC$ is abelian, it is finitely bicomplete and has a bistable
  factorisation system given by epis and monos.  Furthermore, we need not
  restrict our spans to some subcategory $\subc$: in an abelian category the
  pullback of a cospan $X\tr{f} A\tl{g}Y$ can be computed via the kernel of the
  joint map $X \oplus Y\tr{[f,-g]}A$, and similarly pushouts can be computed via
  cokernel, whence the canonical map from the pushout of the pullback of a given
  cospan to itself is always mono, being the inclusion of the image of the joint
  map into the apex.  Similarly, the map from a span to the pullback of its
  pushout is simply the joint map with codomain restricted to its image, and
  hence always epi.  Thus $\catC$ meets both Assumptions \ref{ass:thcorelations}
  and \ref{ass:threlations} with $\subc = \catC$.  Then the category of
  corelations is the pushout of the span $\Cospan{\catC} \leftarrow \catC
  \+{\catC} \op{\catC} \to \Span{\catC}$. But by the dual theorem (Corollary
  \ref{cor.dual}), the pushout of this span is also the category of relations.
  Thus the two categories are isomorphic.  Explicitly, the isomorphism is given
  by taking a corelation to the jointly mono part of its pullback span, and
  taking a relation to the jointly epi part of its pushout cospan.
\end{proof}

\begin{rmk}
  In Theorem~\ref{thm:corelations}, the diagram $\Cospan{\catC} \leftarrow
\subc \+{\subc} \subc \to \Span{\subc}$ `knows' only about $\Cmono$, not the
factorisation system $(\Cepi,\Cmono)$. This is enough, however, since if $\Cmono$
is part of a factorisation system, then the factorisation system is unique.

  Indeed, suppose we have $\Cepi$, $\Cepi'$ such that both $(\Cepi,\Cmono)$ and
  $(\Cepi',\Cmono)$ are factorisation systems. Take $e \in \Cepi$. Then the
  factorisation system $(\Cepi',\Cmono)$ gives a factorisation $e = e';m_1$,
  while $(\Cepi,\Cmono)$ gives a factorisation $e' = e_2;m_2$.
  By substitution, we have $e = e_2;m_2;m_1$. By uniqueness of factorisation, we can then assume without loss of
  generality that $e = e_2$ and $m_2;m_1 = id$. Next, using $e=e_2$ and
  substitution in $e' = e_2;m_2$, we similarly arrive at $m_1;m_2 = id$. Thus $m_1$ is
  an isomorphism, and hence lies in $\Cepi'$. This implies that $e = e';
  m_1 \in \Cepi'$, and hence $\Cepi \subseteq \Cepi'$. We may similarly show
  that $\Cepi' \subseteq \Cepi$, and hence that the two categories are equal.
\end{rmk}

% \begin{rmk} Both the characterisation for corelations and for relations hold in a more general version for categories that are not props. We chose to present the result for props as the presentation is simpler and all our case study are props.\end{rmk}

The next corollary is instrumental in giving categories of (co)relations a presentation by generators and equations.

\begin{cor}\label{cor:presentation} Suppose $\subc$ and $\catC$ are as in Assumption \ref{ass:thcorelations}. Then $\Corel{\catC}$ is freely generated by the objects of $\catC$ and arrows $\tr{f}$, $\tl{g}$ of $\catC$ quotiented by 
\begin{align}
 \tr{f \in \subc}\tl{g \in \subc} \ = \ \tl{p \in \subc}\tr{q \in \subc} && \text{whenever }\tl{p}\tr{q} \text{ pulls back } \tr{f}\tl{g} \label{eq:quotientPb} \\
 \tl{f \in \catC}\tr{g \in \catC} \ = \ \tr{p \in \catC}\tl{q \in \catC} && \text{whenever }\tr{p}\tl{q} \text{ pushes out } \tl{f}\tr{g}. \label{eq:quotientPo}
\end{align}
Equivalently, $\Corel{\catC}$ is the quotient of $\Cospan{\catC}$ by~\eqref{eq:quotientPb}. A dual statement holds for $\Rel{\catC}$.
\end{cor}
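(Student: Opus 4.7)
The plan is to read the presentation directly off the pushout square~\eqref{eq:pushoutCorel} of Theorem~\ref{thm:corelations}, by computing the pushout in $\CAT$ at the level of generators and relations. Since $\CAT$ is cocomplete and the generated-category construction is left adjoint to the forgetful functor into graphs, the pushout admits a presentation formed from the disjoint union of presentations of $\Cospan{\catC}$ and $\Span{\subc}$, modulo the identifications imposed by the two legs from $\subc \+{\subc} \op{\subc}$.

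The key step is to record presentations of each vertex. The category $\subc \+{\subc} \op{\subc}$ is freely generated by arrows $\tr{f}$ and $\tl{g}$ for $f, g \in \subc$, subject to composition in $\subc$ and $\op{\subc}$. The category $\Cospan{\catC}$ is presented by arrows $\tr{f}$ and $\tl{g}$ for $f, g \in \catC$, subject to composition in $\catC$ and $\op{\catC}$ together with the pushout equations~\eqref{eq:quotientPo}, which encode that composition of cospans is by pushout. Dually, $\Span{\subc}$ is presented by arrows from $\subc$ modulo composition and the pullback equations~\eqref{eq:quotientPb}. By~\eqref{eq:functorsCToSpanCospan}, the two legs of the span send $\tr{f}$ and $\tl{g}$ to generators of the same shape in the two targets. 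Since $\subc \subseteq \catC$, the pushout identifies the $\subc$-generators arriving from $\Span{\subc}$ with the corresponding $\catC$-generators of $\Cospan{\catC}$; thus the combined presentation has generators $\tr{f}$, $\tl{g}$ of $\catC$ subject to composition in $\catC$, the relations~\eqref{eq:quotientPo}, and the relations~\eqref{eq:quotientPb}, with functoriality in $\subc$ being absorbed by functoriality in $\catC$. This yields the first description.

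The equivalent quotient description is now immediate: $\Cospan{\catC}$ already enforces all the generators and all the relations of the first description except~\eqref{eq:quotientPb}, so the presentation is obtained by quotienting $\Cospan{\catC}$ by~\eqref{eq:quotientPb} alone. The dual statement for $\Rel{\catC}$ follows by running the same analysis on the pushout square~\eqref{eq:pushoutRel} of Corollary~\ref{cor.dual}, or equivalently by transporting through $\op{\catC}$. The main obstacle will be the bookkeeping that pushouts in $\CAT$ respect presentations; while this is standard, care is needed to check that the functoriality relations transported from $\Span{\subc}$ add nothing new beyond the relations already present in $\Cospan{\catC}$, which follows from $\subc$ being a subcategory of $\catC$ and from the shape-preserving form of the two legs of the span.
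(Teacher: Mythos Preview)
Your proposal is correct and is essentially the approach the paper intends: the paper does not give an explicit proof of this corollary, but frames it as an immediate reading of the pushout square~\eqref{eq:pushoutCorel}, noting right afterwards that the equation families~\eqref{eq:quotientPb} and~\eqref{eq:quotientPo} are exactly the known presentations of $\Span{\subc}$ and $\Cospan{\catC}$ respectively. Your computation of the pushout at the level of presentations, together with the observation that the $\subc$-functoriality relations coming from $\Span{\subc}$ are absorbed by those of $\Cospan{\catC}$ since $\subc \subseteq \catC$, is precisely the unpacking that the paper leaves implicit.
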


Note that, in light of Remark \ref{rmk:Msuffices}, one may also replace \eqref{eq:quotientPb} by the subset of axioms
\begin{align*}
 \tr{f \in \Cmono}\tl{g \in \Cmono} \ = \ \tl{p \in \Cmono}\tr{q \in \Cmono} && \text{whenever }\tl{p}\tr{q} \text{ pulls back } \tr{f}\tl{g}.
\end{align*}
As $\Cmono \subseteq \subc$ by Assumption~\ref{ass:thcorelations}, this may give a smaller presentation.

The importance of the above observation stems from the fact that sets of
equations \eqref{eq:quotientPb} and \eqref{eq:quotientPo} yield a presentation
for categories of spans and cospans over $\catC$ respectively. In various
interesting cases, they enjoy a \emph{finitary} axiomatisation, which can be
elegantly described in terms of distributive laws between categories
\cite{RosebrRWood_fact,Lack2004a}. Under this light, Corollary
\ref{cor:presentation} provides a recipe for axiomatising categories of
(co)relations starting from existing results about spans and cospans. For
instance, this is the strategy adopted in the literature to axiomatise finite
equivalence relations~\cite{Zanasi16,CF}, finite partial equivalence relations
\cite{Zanasi16} and finitely-dimensional subspaces \cite{interactinghopf}. All
these are examples of corelations and are treated in Section \ref{sec:examples}
below. 

\paragraph*{The case of props.} As mentioned in the introduction, the motivating examples for our construction are categories providing a semantic interpretation for circuit diagrams. These are typically props (\textbf{pro}duct and \textbf{p}ermutation categories~\cite{MacLane1965}): it is thus useful to phrase our construction in this setting. 

Recall that a prop is a symmetric monoidal category with objects the
natural numbers, in which $n \tns m = n+m$. Props form a category $\PROP$ with
morphisms the ioo strict symmetric monoidal functors. A simplification to Theorem \ref{thm:corelations} is that the
coproduct $\catC + \catC'$ in $\PROP$ is computed as $\catC \+{\catC} \catC'$ in
$\CAT$, because the set of objects is fixed for any prop.

For monoidal structure on $\catC$ to extend to the categories of (co)spans and
(co)relations, it is crucial that the monoidal product respects the ambient
structure.
  
Let $(\catC,\tns)$ be a prop with pushouts, and let $(\subc,\tns)$ be a
  sub-prop. We say that {\bf the monoidal product preserves pushouts} in
  $\subc$ if, for all spans $N \leftarrow Y \to M$ and $N' \leftarrow Y' \to M'$
  in $\subc$, we have an isomorphism
  \[
    (N \tns N') +_{Y \tns Y'} (M \tns M') \cong (N +_Y M) \tns (N' +_{Y'} M').
  \]
  Note that this pushout is taken in $\catC$. This condition holds, for example,
  whenever $\catC$ is monoidally closed. We say the monoidal product preserves
  pullbacks if the analogous condition holds for pullbacks.

  Furthermore, we say that a subcategory $\subc$ is {\bf closed under $\tns$}
  if, given morphisms $f,g$ in $\subc$, the morphism $f\oplus g$ is also in
  $\subc$.

\begin{thm}\label{thm:corelationsPROPs}
  Let $\catC$ and $\subc$ be props satisfying Assumption
  \ref{ass:thcorelations}. Suppose that the monoidal product of $\catC$
  preserves pushouts in $\catC$ and pullbacks in $\subc$, and that $\Cmono$
  is closed under the monoidal product. Then we have a pushout square in $\PROP$
\begin{equation}
\label{eq:pushoutCorelProps}
\begin{aligned}
    \xymatrix@C=40pt{
      {\subc + \op{\subc}} \ar[r] \ar[d] & {\Spanc{\subc}}
      \ar[d]^{\SpanToCorel} \\
    {\Cospan{\catC}} \ar[r]_-{\CospanToCorel} & {\Corel{\catC}} \pullbackcorner.
    }
  \end{aligned}
\end{equation}
\end{thm}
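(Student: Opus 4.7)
The plan is to reduce to Theorem~\ref{thm:corelations}, which already supplies a pushout in $\CAT$, and then to argue that this pushout survives the passage to $\PROP$. Since both $\subc$ and $\op{\subc}$ are sub-props sharing the object set $\mathbb{N}$, the coproduct $\subc + \op{\subc}$ in $\PROP$ coincides with the amalgamation $\subc \+{\subc} \op{\subc}$ appearing in Theorem~\ref{thm:corelations}, so the two candidate pushout diagrams have the same apex. The remaining work is then: (a) show that under the new monoidal hypotheses each of the four categories in the square carries a canonical prop structure and each of the four functors is an ioo strict symmetric monoidal functor, so that the diagram lives in $\PROP$; and (b) verify that the unique mediating $\CAT$-functor supplied by Theorem~\ref{thm:corelations} is automatically a prop morphism.

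For (a), the fact that $\tns$ preserves pushouts in $\catC$ and pullbacks in $\subc$ gives bifunctorial monoidal products on $\Cospan{\catC}$ and $\Span{\subc}$, defined by tensoring cospan/span representatives componentwise; the functors in \eqref{eq:functorsCToSpanCospan} and hence the induced map $\subc +\op{\subc} \to \Cospan{\catC}$ and $\subc + \op{\subc} \to \Span{\subc}$ are evidently strict symmetric monoidal. The delicate point is showing that $\Corel{\catC}$ is a prop and that $\Gamma$ is a prop morphism. Monoidal structure on $\Corel{\catC}$ must be defined representative-wise via $\Cospan{\catC}$, so one must check that the defining corelation equivalence is a monoidal congruence: given $\tr{f}\tl{g}$ equivalent to $\tr{f'}\tl{g'}$ via $m\in\Cmono$ as in \eqref{eq:defcorelations}, the tensored cospans $\tr{f\tns h}\tl{g\tns i}$ and $\tr{f'\tns h}\tl{g'\tns i}$ should be equivalent, witnessed by $m \tns \id$; this is exactly where closure of $\Cmono$ under $\tns$ is needed. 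Proposition~\ref{prop:spantocorel} then yields $\Pi$ as an ioo functor, and a straightforward check using the representative-wise definition of $\tns$ confirms that both $\Pi$ and $\Gamma$ are strict symmetric monoidal.

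For (b), let $\mathcal{D}$ be any prop and let $F \maps \Cospan{\catC} \to \mathcal{D}$ and $G \maps \Span{\subc} \to \mathcal{D}$ be prop morphisms agreeing on $\subc + \op{\subc}$. Forgetting monoidal structure gives two ioo $\CAT$-functors, and Theorem~\ref{thm:corelations} supplies a unique ioo $\CAT$-functor $H \maps \Corel{\catC} \to \mathcal{D}$ factoring them. It remains to check $H$ is strict symmetric monoidal. By fullness of $\Gamma$ (Proposition~\ref{prop:CospanToCorelFull}), any corelation $a$ admits a cospan representative $c$ with $\Gamma(c) = a$ and hence $H(a) = F(c)$; and by our definition of the monoidal product on $\Corel{\catC}$, a representative of $a\tns a'$ is $c \tns c'$. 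Since $F$ preserves $\tns$ and symmetries, one concludes $H(a\tns a') = F(c\tns c') = F(c) \tns F(c') = H(a) \tns H(a')$, and similarly for symmetries; uniqueness of $H$ as a prop morphism follows from its uniqueness as a $\CAT$-functor.

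The main obstacle is step (a), specifically the verification that corelation equivalence is a $\tns$-congruence and that $\Cospan{\catC}$ and $\Corel{\catC}$ are genuine props: once this is secured via closure of $\Cmono$ under $\tns$ and preservation of pushouts by $\tns$, everything else amounts to transporting the universal property of the $\CAT$-pushout along the forgetful functor $\PROP \to \CAT / \mathbb{N}$, which is straightforward for ioo diagrams between props.
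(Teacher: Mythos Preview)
Your proposal is correct and follows essentially the same approach as the paper: it first equips $\Cospan{\catC}$, $\Span{\subc}$, and $\Corel{\catC}$ with prop structures (using precisely the hypotheses on $\tns$ and the closure of $\Cmono$ under $\tns$ that you identify), checks that $\CospanToCorel$ and $\SpanToCorel$ are prop morphisms, and then verifies that the unique mediating functor $\theta$ from Theorem~\ref{thm:corelations} is strict monoidal via the equation $\theta(a\tns b) = \Phi(\tilde a \tns \tilde b) = \Phi\tilde a \tns \Phi\tilde b = \theta a \tns \theta b$ for chosen cospan representatives. Your framing in terms of the forgetful functor $\PROP \to \CAT/\mathbb{N}$ is a slightly more conceptual packaging, but the underlying argument is the same.
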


We also state the prop version of the abelian case. An abelian prop is just a
prop which is also an abelian category and where the monoidal product is the
biproduct.

\begin{cor}
 \label{thm:corelationsAbPROP}
  Suppose that $\catC$ is an abelian prop. The following is a pushout in $\PROP$.
\begin{equation}\tag{$\triangle$}
\begin{aligned}
    \xymatrix@C=40pt{
    {\catC + \op{\catC}} \ar[r]
    \ar[d] & {\Span{\catC}}
    \ar[d]^{\SpanToCorel} \\
    {\Cospan{\catC}} \ar[r]_-{\CospanToCorel} & {\Corel{\catC}} \cong \Rel{\catC} \pullbackcorner
    }
  \end{aligned}
\end{equation}
\end{cor}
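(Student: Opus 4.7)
The plan is to derive this corollary as a straightforward specialisation of Theorem~\ref{thm:corelationsPROPs}, taking $\subc = \catC$, and then to use the isomorphism $\Corel{\catC} \cong \Rel{\catC}$ already observed in Corollary~\ref{thm:pushoutAbelian}. The core of the argument is simply to check that an abelian prop satisfies all of the hypotheses of Theorem~\ref{thm:corelationsPROPs}.

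First I would recall from the proof of Corollary~\ref{thm:pushoutAbelian} that when $\catC$ is abelian, the triple $(\catC,\catC,(\Cepi,\Cmono))$ with the epi-mono factorisation system already meets Assumption~\ref{ass:thcorelations}: $\catC$ is finitely bicomplete, epi-mono is a bistable factorisation system with $\Cmono$ contained in the monos, and the canonical map from the pushout of the pullback of any cospan to its apex is mono (it is the inclusion of the image of the joint map $X\oplus Y \to A$). So nothing new is needed there.

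Next I would verify the two additional prop-theoretic hypotheses. Since $\catC$ is an \emph{abelian} prop, the monoidal product $\oplus$ is the biproduct. Hence $-\oplus B$ is simultaneously a left adjoint (to the diagonal followed by projection) and a right adjoint, so it preserves both finite limits and finite colimits in each variable, and in particular preserves pullbacks and pushouts. The required isomorphism
\[
(N\oplus N') +_{Y\oplus Y'} (M\oplus M') \;\cong\; (N+_Y M)\oplus(N'+_{Y'} M')
\]
and its pullback analogue then follow from the standard fact that colimits commute with colimits and limits with limits. For the closure of $\Cmono$ under $\oplus$: in an abelian category the biproduct of two monos is mono, since $f\oplus g$ has trivial kernel whenever $f$ and $g$ do. So $\Cmono$ (the monos in $\catC$) is closed under $\oplus$.

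With the hypotheses verified, Theorem~\ref{thm:corelationsPROPs} applies with $\subc = \catC$ and delivers the square
\[
\xymatrix@C=30pt@R=12pt{
{\catC + \op{\catC}} \ar[r] \ar[d] & {\Span{\catC}} \ar[d]^{\SpanToCorel} \\
{\Cospan{\catC}} \ar[r]_-{\CospanToCorel} & {\Corel{\catC}}
}
\]
as a pushout in $\PROP$. Finally, Corollary~\ref{thm:pushoutAbelian} already provides the identification $\Corel{\catC}\cong\Rel{\catC}$ (explicitly, by sending a corelation to the jointly-mono part of its pullback span and vice versa), which substitutes into the lower-right corner. I do not anticipate any real obstacle: everything reduces to the two bookkeeping facts that biproducts preserve all finite (co)limits and that monos are closed under biproduct, both of which are standard in any abelian category.
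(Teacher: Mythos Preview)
Your proposal is correct and follows essentially the same route as the paper's own proof: verify that the biproduct preserves pushouts and pullbacks and that monos are closed under biproduct, then apply Theorem~\ref{thm:corelationsPROPs} with $\subc=\catC$, invoking Corollary~\ref{thm:pushoutAbelian} for the identification $\Corel{\catC}\cong\Rel{\catC}$. The only quibble is your phrasing of the adjoint argument for $-\oplus B$; it is cleaner (and what the paper intends) to note that the two-variable functor $\oplus\colon \catC\times\catC\to\catC$ is both left and right adjoint to the diagonal, hence preserves all (co)limits computed componentwise.
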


We leave the proofs of these results to Appendix~\ref{app:spansPROP}.

\section{Examples}\label{sec:examples}

\subsection{From Injections to Equivalence Relations} \label{ex:er}

Our first example concerns the construction of equivalence relations starting
from injective functions. For $n \in \mathbb{N}$, write $\ord{n}$ for the set
$\{0,1,\dots,n\}$, and $\uplus$ for the disjoint union of sets. We fix a prop
$\ER$ whose arrows $n \to m$ are the equivalence relations on
$\ord{n}\uplus\ord{m}$. For composition $e_1 \poi e_2 \maps n \to m$ of
equivalence relations $e_1 \maps n \to z$ and $e_2 \maps z \to m$, one first
defines an equivalence relation on $\ord{n}\uplus\ord{z}\uplus\ord{m}$ by gluing
together equivalence classes of $e_1$ and $e_2$ along common witnesses in
$\ord{z}$, then obtains $e_1 \poi e_2$ by restricting to elements of
$\ord{n}\uplus\ord{m}$. 
 
Equivalence relations are equivalently described as corelations of functions. For this, let $\F$ be the prop whose arrows $n \to m$ are functions from $\ord{n}$ to $\ord{m}$. $\F$ has the usual factorisation system $(\Surj,\Inj)$ given by epi-mono factorisation, where $\Surj$ and $\Inj$ are the sub-props of surjective and of injective functions respectively. Given these data, one can check that $\ER$ is isomorphic to $\Corel{\F}$, the prop of corelations on $\F$.

We are now in position to apply our construction of
Theorem~\ref{thm:corelationsPROPs}. First, we verify
Assumption~\ref{ass:thcorelations} with $\catC$ instantiated as $\F$ and $\subc$
as $\Inj$. The only point requiring some work is the third, which goes as
follows: given a cospan of monos $X \to P \leftarrow Y$, consider $X$, $Y$ as
subsets of $P$. Then the pullback-pushout diagram looks like
\[
  \xymatrix@R=5pt@C=20pt{
    & X \ar[dr] \ar@/^/[drr] \\
    X\cap Y \ar[ur] \ar[dr] && X\cup Y \ar@{-->}[r] & P \\
    & Y \ar[ur] \ar@/_/[urr]
  }
\]
and $X\cup Y \to P$ is the inclusion map, hence a mono in $\Inj$. Therefore, we
can construct the pushout diagram~(\ref{eq:pushoutCorelProps}) as follows:
\begin{equation}\
\label{eq:pushoutER}
\begin{aligned}
    \xymatrix@R=15pt{
    {\Inj + \op{\Inj}} \ar[r]
    \ar[d]& {\Span{\Inj}}
    \ar[d] \\
    {\Cospan{\F}} \ar[r] & {\ER} \pullbackcorner
    }
  \end{aligned}
\end{equation}
This modular reconstruction easily yields a presentation by generators and relations for (the arrows of) $\ER$. Following the recipe of Corollary \ref{cor:presentation}, $\ER$ is the quotient of $\Cospan{\F}$ by all the equations generated by pullbacks in $\Inj$, as in~\eqref{eq:quotientPb}. Now, recall that $\Inj$ is presented (in string diagram notation~\cite{Selinger2009}) by the generator $\Bunit \colon 0 \to 1$, and no equations. Thus, in order to present all the equations of shape~\eqref{eq:quotientPb} it suffices to consider a single pullback square in $\Inj$:
\begin{equation}\label{eq:spaninj}
\vcenter{\xymatrix@=10pt{
& 1 & \\
0\ar[ur]^{\Bunit} && 0 \ar[ul]_{\Bunit} \\
& \ar[ul]^{\EmptyDiag} 0 \ar[ur]_{\EmptyDiag} &
}} \quad \text{, yielding the equation } \Bunit \poi \Bcounit \ = \ \EmptyDiag \poi \EmptyDiag.
\end{equation}
On the other hand, we know $\Cospan{\F}$ is presented by the theory of special
commutative Frobenius monoids (also termed separable Frobenius algebras),
see~\cite{Lack2004a}. Therefore $\ER$ is presented by the generators and
equations of special commutative Frobenius monoids, with the addition
of~\eqref{eq:spaninj}. This is known as the theory of extraspecial commutative
Frobenius monoids~\cite{CF}. This result also appears in~\cite{Zanasi16}, in both cases
without the realisation that it stems from a more general construction.

\subsection{From Functions to the Terminal Prop}\label{ex:triv}

 It is instructive to see a non-example, to show that the assumptions on $\subc$ are not redundant. One may want consider an obvious variation of~\eqref{eq:pushoutER}, where instead of $\Inj$ one takes the whole $\F$ as $\subc$. However, with this tweak the construction collapses: the pushout is the terminal prop $\Triv$ with exactly one arrow between any two objects. 
\begin{equation}
\begin{aligned}
    \xymatrix@R=15pt{
    {\F + \op{\F}} \ar[r]
    \ar[d] & {\Span{\F}} \ar[d]^{} \\
    {\Cospan{\F}} \ar[r]_-{} & {\Triv} \pullbackcorner
    }
  \end{aligned}
\end{equation}
This phenomenon was noted before (\cite{interactinghopf}, see also~\cite[Th.
5.6]{HeunenVicaryBook}), however without an understanding of its relationship
with other (non-collapsing) instances of the same construction.
Theorem~\ref{thm:corelationsPROPs} explains why this case fails where others
succeed: the problem lies in the choice of $\F$ as the subcategory $\subc$.
Indeed, the canonical map given by the pullback of the pushout of any span in
$\subc = \F$ does not necessarily lie in $\Inj$, i.e. it may be not injective. An
example is given by the cospan $0 \to 1 \leftarrow 2$, with the canonical map
from the pushout of the pullback cospan the non-injective map $2 \to 1$:
\[
  \xymatrix@R=5pt@C=35pt{
    & 2 \ar[dr] \ar@/^/[drr] \\
    0 \ar[ur] \ar[dr] && 2 \ar@{-->}[r] & 1. \\
    & 0 \ar[ur] \ar@/_/[urr]
  }
\]

\subsection{From Injections to Partial Equivalence Relations} \label{ex:per}

Partial equivalence relations (PERs) are common structures in program semantics, which date back to the seminal work of Scott~\cite{Scott1975-datatypeslattices} and recently revamped in the study of quantum computations (e.g., \cite{Jacobs_quantumPER,HasuoH11_GOIQuantumPER}). Our approach yields a characterisation for the prop $\PER$ whose arrows $n \to m$ are PERs on $\ord{n}\uplus\ord{m}$, with composition as in $\ER$. The ingredients of the construction generalise Example~\ref{ex:er} from total to partial maps. Instead of $\F$ one starts with $\PF$, the prop of partial functions, which has a factorisation system involving the sub-prop of partial surjections and the sub-prop of injections. The resulting prop of $\PF$-corelations is isomorphic to $\PER$. Theorem~\ref{thm:corelationsPROPs} yields the following pushout
\begin{equation}\
\label{eq:pushoutPER}
\begin{aligned}
    \xymatrix@R=15pt{
    {\Inj + \op{\Inj}} \ar[r]
    \ar[d]& {\Span{\Inj}}
    \ar[d] \\
    {\Cospan{\PF}} \ar[r] & {\PER} \pullbackcorner .
    }
  \end{aligned}
\end{equation}
As in Example~\ref{ex:er}, following Corollary \ref{cor:presentation},~\eqref{eq:pushoutPER} reduces the task of axiomatising $\PER$ to the one of axiomatising $\Cospan{\PF}$ and adding the single equation~\eqref{eq:spaninj} from $\Span{\Inj}$. $\Cospan{\PF}$ is presented by ``partial'' special commutative Frobenius monoids, studied in~\cite{Zanasi16}.

\subsection{From Linear Maps to Subspaces} \label{ex:sv}

We now consider an example for the abelian case: the prop $\SV{\field}$ whose arrows $n \to m$ are $\field$-linear subspaces of $\field^n \times \field^m$, for a field $\field$. Composition in $\SV{\field}$ is relational: $V \poi W = \{(v,w) \mid \exists u .(v,u) \in V, (u,w) \in W\}$. Interest in $\SV{\field}$ is motivated by various recent applications. We mention the case where $\field$ is the field of Laurent series, in which $\SV{\field}$ constitutes a denotational semantics for signal flow graphs~\cite{Bonchi2014b,BaezErbele-CategoriesInControl,Bonchi2015,BonchiSZ17}, and the case $\field = \Z_2$, in which $\SV{\field}$ is isomorphic to the phase-free ZX-calculus, an algebra for quantum observables~\cite{CoeckeDuncanZX2011,BialgAreFrob14}.

Now, in order to apply our construction, note that $\SV{\field}$ is isomorphic to $\Rel{\Vect{k}}$, where $\Vect{k}$ is the abelian prop whose arrows $n \to m$ are the linear maps of type $\field^n \to \field^m$ (the monoidal product is by direct sum). This follows from the observation that subspaces of $\field^n \times \field^m$ of dimension $z$ correspond to mono linear maps from $\field^z$ to $\field^n \times \field^m$, whence to jointly mono spans $n \tl{} z \tr{}m$ in $\Vect{k}$.

We are then in position to use Corollary \ref{thm:corelationsAbPROP}, which yields the following pushout characterisation for $\SV{\field}$.
\begin{equation}\
\label{eq:pushoutSV}
\begin{aligned}
    \xymatrix@R=15pt{
    {\Vect{k} + \op{\Vect{k}}} \ar[r]
    \ar[d]& {\Span{\Vect{k}}}
    \ar[d] \\
    {\Cospan{\Vect{k}}} \ar[r] & {\SV{\field}} \pullbackcorner .
    }
  \end{aligned}
\end{equation}
This very same pushout has been studied in~\cite{BialgAreFrob14} for the $\field = \Z_2$ case. As before, the modular reconstruction suggests a presentation by generators and relations for $\SV{\field}$, in terms of the theories for spans and cospans in $\Vect{k}$. The axiomatisation of $\SV{\field}$ is called the theory of interacting Hopf algebras~\cite{interactinghopf,ZanasiThesis} , as it features two Hopf algebras structures and axioms expressing their combination.

On the top of existing results on $\SV{\field}$, our Corollary \ref{thm:corelationsAbPROP} suggests a
novel perspective, namely that $\SV{\field}$ can be also thought as the prop of
\emph{corelations} over $\Vect{\field}$. This representation can be understood
by recalling the 1-1 correspondence between subspaces of $\field^n \times
\field^m$ and (solution sets of) homogeneous systems of equations $M v = 0$,
where $M$ is a $z \times (n+m)$ matrix. Writing the block decomposition $M =
(M_1\:\vert\: -M_2)$, where $M_1$ is a $z \times n$ matrix and $M_2$ a $z \times
m$ matrix, this is the same as solutions to $M_1v_1 = M_2v_2$. These systems
then yield jointly epi cospans $n \tr{M_1} z \tl{M_2} m$ in $\Vect{\field}$,
that is, corelations.

\subsection{From Free Module Homomorphisms to Linear Corelations} \label{ex:PID}

We now consider the generalisation of the linear case from fields to principal
ideal domains (PIDs). In order to form a prop, we need to restrict our attention
to finitely-dimensional \emph{free} modules over a PID $\PID$. The symmetric
monoidal category of such modules and module homomorphisms, with monoidal
product by direct sum, is equivalent to the prop $\fmod\PID$ whose arrows $n \to
m$ are $\PID$-module homomorphisms $\PID^n \to \PID^m$ or, equivalently,
$m\times n$-matrices in $\PID$. Because of the restriction to free modules,
$\fmod\PID$ is not abelian. However, it is still finitely bicomplete and has a
costable (epi, split mono)-factorisation system.\footnote{The factorisation
given by (epi, mono) morphisms is not unique up to isomorphism, whence the
restriction to split monos---see~\cite{Fong2015}.} Note that the fact that the ring $\PID$ is a PID matters for the existence of pullbacks, as it is necessary for
submodules of free $\PID$-modules to be free---pushouts exist by self-duality
of $\fmod\PID$.

Write $\mfmod\PID$ for the prop of split monos in $\fmod\PID$. It is a
classical, although nontrivial, theorem in control theory that this category
obeys the required condition on pushouts of pullbacks \cite{Fong2015}. Hence
Theorem~\ref{thm:corelationsPROPs} yields the pushout square
\begin{equation}
  \label{eq:pushoutCorelMod1}
  \begin{aligned}
    \xymatrix@C=40pt{
      \fmod\PID + \op{\fmod \PID} \ar[d] \ar[r] & \Span{\mfmod\PID} \ar[d], \\
      \Cospan{\fmod \PID} \ar[r] &\Corel{\fmod\PID} \pullbackcorner
    }
  \end{aligned}
\end{equation}
in $\PROP$. This modular account of $\Corel{\fmod\PID}$ is relevant for the semantics of dynamical systems. When $\PID = \mathbb{R}[s,s^{-1}]$, the ring of Laurent polynomials
in some formal symbol $s$ with coefficients in the reals, the prop
$\Corel{\fmod{\mathbb{R}[s,s^{-1}]}}$ models complete linear time-invariant
discrete-time dynamical systems in $\mathbb{R}$; more details can be found
in~\cite{Fong2015}. In that paper, it is also proven that $\Corel{\fmod\PID}$ is axiomatised by the presentation of $\Cospan{\fmod \PID}$ with the addition of the law $\Bunit\poi\Bcounit \ = \ \EmptyDiag$. By Corollary~\ref{cor:presentation}, it follows that $\Bunit\poi\Bcounit \ = \ \EmptyDiag$ originates by a pullback in $\Span{\mfmod\PID}$ and in this case it is the only contribution of spans to the presentation of corelations.
\smallskip

It is worth noticing that, even though $\fmod\PID$ is not abelian, the pushout of spans and cospans over $\fmod\PID$ does not have a trivial outcome as for the prop $\F$ of functions (Example~\ref{ex:triv}). Instead, in~\cite{interactinghopf,ZanasiThesis} it is proven that we have the pushout square
\begin{equation}
  \label{eq:pushoutCorelMod2}
  \begin{aligned}
    \xymatrix@C=40pt{
      \fmod{\PID} \oplus \op{\fmod \PID} \ar[d] \ar[r] & \Span{\fmod \PID} \ar[d], \\
      \Cospan{\fmod \PID} \ar[r] &\SV\frPID \pullbackcorner
    }
  \end{aligned}
\end{equation}
in $\PROP$, where $\frPID$ is the \emph{field of fractions} of $\PID$. 

The pushout \eqref{eq:pushoutCorelMod2} is relevant for the categorical
semantics for signal flow graphs pursued in \cite{Bonchi2014b,Bonchi2015,BonchiSZ17}. Even though it is not
an instance of Theorem~\ref{thm:corelationsPROPs} or Corollary \ref{thm:corelationsAbPROP}, our developments shed light on \eqref{eq:pushoutCorelMod2} through the comparison with \eqref{eq:pushoutCorelMod1}. First, note that any element $r \in \PID$ yields a module homomorphism $x \mapsto rx$ in $\fmod \PID$ of type $1 \to 1$, represented as a string diagram $\scalar$. The key observation is that, in \eqref{eq:pushoutCorelMod2}, $\Span{\fmod \PID}$ is contributing to the axiomatisation of $\SV\frPID$ (\emph{cf.} Corollary \ref{cor:presentation}) by adding, for each $r$, an equation 
\begin{equation*}
  \scalar \poi \scalarop \ = \ \IdDiag \poi \IdDiag \text{, corresponding to a pullback}
  \vcenter{\xymatrix@=10pt{
    & 1 & \\
    1\ar[ur]^{\scalar} && 1 \ar[ul]_{\scalar} \\
    & \ar[ul]^{\IdDiag} 1 \ar[ur]_{\IdDiag} &
  }}
\end{equation*}
in $\fmod\PID$. Back to \eqref{eq:pushoutCorelMod1}, the only equations of this kind that $\Span{\mfmod \PID}$ is contributing with are those in which $\scalar$ is a \emph{split mono}, that means, when $r$ is \emph{invertible} in $\PID$. Therefore, the difference between \eqref{eq:pushoutCorelMod1} and \eqref{eq:pushoutCorelMod2} is that in the latter one is adding formal inverses $\scalarop$ also for elements $\scalar$ which are not originally invertible in $\PID$. This explains the need of the field of fractions $\frPID$ of $\PID$ in expanding the pushout object from $\Corel{\fmod\PID}$ (in  \eqref{eq:pushoutCorelMod1}) to $\Corel{\Vect{\frPID}} \cong \SV\frPID$ (in  \eqref{eq:pushoutCorelMod2}).

\subsection{From Maps of Algebras to Relations of Algebras}
\label{ssec.monadalg}
Let $\catC$ be a regular category in which all regular epimorphisms split,
and let $T$ be a monad on $\catC$. Then the
Eilenberg--Moore category $\catC^T$ is regular. As for any regular
category, the (regular epi,mono)-factorisation system is stable, so we can
construct the category $\Rel{\catC^T}$ of relations in $\catC^T$. With a few
further conditions on $T$, we can apply Corollary
\ref{cor.dual} to realise $\Rel{\catC^T}$ as a pushout of categories.

To see this, note that the regular epis in $\catC^T$ are simply the algebra
maps with underlying map in $\catC$ a regular epi. Indeed, since by assumption
regular epimorphisms in $\catC$ split, coequalizers in $\catC^T$ can be
computed using coequalizers in $\catC$. Moreover, as the forgetful functor
$U\maps \catC^T \to \catC$ is monadic, it creates finite limits, and the
canonical map from a span to the pullback of its pushout can also be computed
in $\catC$.  Thus $\catC^T$ satisfies Assumption \ref{ass:threlations} whenever
it is finitely cocomplete and $\catC$ satisfies Assumption
\ref{ass:threlations}. (Given the finite completeness of $\catC$, it is in fact
enough for $\catC^T$ to have reflexive coequalizers: this implies finite
cocompleteness.) This allows us to apply the construction of Corollary
\ref{cor.dual}. 

These conditions are met, for example, for any monad $T$ over $\Vect{\field}$.
Hence, for example, we can apply Corollary \ref{cor.dual} to the construction
of the category of relations between algebras over a field (that is, vector
spaces equipped with a bilinear product).

\section{Concluding remarks} \label{sec:conclusion}

In summary, we have shown that categories of (co)relations may, under certain
general conditions, be constructed as pushouts of categories of spans and
cospans.  In particular, especially since categories of spans and cospans can
frequently be axiomatised using distributive laws, this offers a method of
constructing axiomatisations of categories of (co)relations.  Our results extend
to the setting of props, and more generally symmetric monoidal categories.
Moreover, these results are readily illustrated, unifying a diverse series of
examples drawn from algebraic theories, program semantics, quantum computation,
and control theory.

Looking forward, note that in the monoidal case the resulting (co)relation
category is a so-named hypergraph category: each object is equipped with a
special commutative Frobenius structure.  Hypergraph categories are of
increasing interest for modelling network-style diagrammatic languages, and
recent work, such as that of decorated corelations \cite{Fon16} or the
generalized relations of Marsden and Genovese \cite{MG17}, gives precise methods for
tailoring constructions of these categories towards chosen applications. Our
example on relations in categories of algebras for a monad
(Subsection~\ref{ssec.monadalg}) hints at general methods for showing the
present universal construction applies to these novel examples. We leave this as
an avenue for future work.

\bibliographystyle{plainurl}
\bibliography{calcobib}

\newpage \appendix

\section{Omitted Proofs}

\subsection{Proof of Proposition \ref{lemma:charCospanToCorel}}\label{sec:minorproof}

\begin{proof}[Proof of Proposition \ref{lemma:charCospanToCorel}]
We focus on relations, the proof for corelations being dual. It suffices to show that two spans $\tl{f} \tr{g}$ and $\tl{f'}\tr{g'}$
represent the same relation if and only if the $\Cmono$ parts of the
factorisations of $\langle f,g\rangle$ and $\langle f',g' \rangle$ are
equal.

For the backward direction, write factorisations $\langle f,g\rangle = e;m$ and
$\langle f',g'\rangle = e';m$, and note that $m = \langle m;p_1,m;p_2\rangle$,
where the $p_i$ are the canonical projections. Thus the following diagrams commute
  \[
    \xymatrix@R=5pt{
      &&  \ar[dll]_{f} \ar[ddd]^e \ar[drr]^{g} &&\\
      &&&& \\
      &  \ar[ul]^{p_1}&& \ar[ur]_{p_2} &\\
      &&   \ar[ul]^{m}  \ar[ur]_{m}&&
    } \qquad \qquad
        \xymatrix@R=5pt{
      &&  \ar[dll]_{f'} \ar[ddd]^{e'} \ar[drr]^{g'} &&\\
      &&&& \\
      &  \ar[ul]^{p_1}&&  \ar[ur]_{p_2} &\\
      &&   \ar[ul]^{m}  \ar[ur]_{m}&&
    }
  \]
Therefore $\tr{e \in \Cepi}$ and $\tr{e' \in \Cepi}$ witness that both $\tl{f} \tr{g}$ and
$\tl{f'}\tr{g'}$ are in the equivalence class of
$\tl{p_1}\tl{m}\tr{m}\tr{p_2}$ and so represent the same relation.

For the forward direction, note that if $\tl{f} \tr{g}$ and $\tl{f'}\tr{g'}$
represent the same relation, then there exists a sequence of spans $\tl{f_i}
\tr{g_i}$ in $\catC$ together with morphisms $\tr{e_i \in \Cepi}$,
$i=0, \dots n$, such that $f_1=f$, $g_1=g$, $f_n =f'$, $g_n= g'$, and for all $i
= 1,\dots ,n$ either (i) $e_i;f_i = f_{i-1}$ and $e_i;g_i=g_{i-1}$, or (ii) $f_i
=e_i;f_{i-1}$ and $g_i=e_i;g_{i-1}$. This implies either (i) $e_i;\langle f_i,g_i\rangle
=\langle f_{i-1},g_{i-1}\rangle$ or (ii) $\langle f_i,g_i\rangle
=e_i;\langle f_{i-1},g_{i-1}\rangle$. In either case, by the uniqueness of
factorisations, we see that the $\Cmono$ parts of $\langle f_i,g_i \rangle$ are
the same for all $i$.
\end{proof}

\subsection{Proof of Theorem \ref{thm:corelations}}\label{sec:proof}

We devote this section to give a step-by-step argument for Theorem \ref{thm:corelations}. 

\begin{prop}\label{prop:pushoutcommutes} The square~\eqref{eq:pushoutCorel} commutes. \end{prop}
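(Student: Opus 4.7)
The plan is to verify commutativity of the square by checking equality of the two composites on a generating set of morphisms, exploiting that all four functors involved are identity-on-objects. Commutativity on objects is immediate, so I reduce the proof to showing that for every morphism $\alpha$ in $\subc \+{\subc} \op{\subc}$, we have $\CospanToCorel(\alpha_{\downarrow}) = \SpanToCorel(\alpha_{\rightarrow})$, where $\alpha_{\downarrow}$ and $\alpha_{\rightarrow}$ denote the images of $\alpha$ in $\Cospan{\catC}$ and in $\Span{\subc}$ respectively (via the two legs of the pushout defining $\subc \+{\subc} \op{\subc}$, i.e.\ the functors from~\eqref{eq:functorsCToSpanCospan}).

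Since $\subc \+{\subc} \op{\subc}$ is generated by the morphisms coming from $\subc$ (inserted via $\catC \to \Cospan{\catC}$-style and $\catC \to \Span{\catC}$-style functors) and from $\op{\subc}$, and since all four functors are strict, it suffices to check the two generator cases. For a morphism $X \tr{f} Y$ in $\subc$: the down-then-right composite yields the corelation represented by the cospan $X \tr{f} Y \tl{\id} Y$; the right-then-down composite first produces the span $X \tl{\id} X \tr{f} Y$ in $\Span{\subc}$ and then applies $\SpanToCorel$, whose pushout is precisely $X \tr{f} Y \tl{\id} Y$. Dually, for $X \tl{g} Y$ viewed as a morphism of $\op{\subc}$, both composites yield the corelation class of $X \tr{\id} X \tl{g} Y$.

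The only subtle point to address is that $\SpanToCorel$, as defined in Proposition~\ref{prop:spantocorel}, does not return the pushout cospan itself but rather the \emph{jointly-in-$\Cepi$ part} of its copairing factorisation. I will note that, by the universal property of this factorisation, the resulting cospan and the raw pushout cospan are linked by a morphism in $\Cmono$, and therefore represent the same corelation by Definition~\ref{def:corel}. This observation turns the two case checks above into literal equalities of corelation classes.

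I do not expect any genuine obstacle here: the statement is a straightforward compatibility of definitions, and the only thing that could go wrong would be if the $\Cepi$-factorisation step in $\SpanToCorel$ changed the corelation class, which it cannot by the remark just made. The real content of Theorem~\ref{thm:corelations} lies in the universal property of the pushout, not in the commutativity proved here.
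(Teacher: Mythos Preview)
Your proposal is correct and follows essentially the same approach as the paper: reduce to the generating morphisms coming from the two injections $\subc, \op{\subc} \hookrightarrow \subc \+{\subc} \op{\subc}$, and then observe that the pushout of $\tl{\id}\tr{f}$ is $\tr{f}\tl{\id}$ (and symmetrically). Your extra remark that the jointly-in-$\Cepi$ factorisation step in $\SpanToCorel$ does not change the corelation class is a point the paper leaves implicit, so your write-up is in fact slightly more careful than the original.
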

\begin{proof}
  As ${\subc \+{\subc} \op{\subc}}$ is a pushout, it is enough to show
  \eqref{eq:pushoutCorel} commutes on the two injections of $\subc$, $\op{\subc}$ into ${\subc
  \+{\subc} \op{\subc}}$. This means that we have to show, for any $f \colon
  a\to b$ in $\subc$, that
  \[
    \SpanToCorel(\tl{\id}\tr{f}) = \CospanToCorel(\tr{f}\tl{\id}) \quad\text{ and }\quad
    \SpanToCorel(\tl{f}\tr{\id}) = \CospanToCorel(\tr{\id}\tl{f}).
  \]
  These are symmetric, so it suffices to check one.  This follows immediately
  from the fact that the pushout of $\tl{\id}\tr{f}$ is $\tr{f}\tl{\id}$.
\end{proof}

Suppose we have a cocone over $\Cospan\catC \longleftarrow \subc\+{\subc}\subc
\longrightarrow \Spanc\subc$.  That is, suppose we have the commutative square:
\begin{equation}
\label{eq:arbitrary}
\tag{$\dag$}
\raise15pt\hbox{$
\xymatrix@C=40pt{
{\subc \+{\subc} \op{\subc}} \ar[r] \ar[d] & {\Spanc{\subc}} \ar[d]^{\Psi} \\
{\Cospan{\catC}} \ar[r]_-{\Phi} & {\mathcal{X}}.
}$}
\end{equation}
We prove two lemmas, from which the main theorem follows easily.
\begin{lem} \label{lemma:arbitraryCommutativeDiag} 
  If $\tr{p} \tl{q} $ is a cospan in $\subc$ with pullback $\tl{f} \tr{g}$ (in
  $\catC$), then $\Psi(\tl{f} \tr{g}) = \Phi(\tr{p} \tl{q} )$. Similarly, if
  $ \tl{f} \tr{g} $ is a span in $\subc$ with pushout $\tr{p}\tl{q}$ (in
  $\catC$), then $\Psi(\tl{f} \tr{g} ) = \Phi(\tr{p} \tl{q})$.  
\end{lem}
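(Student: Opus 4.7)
The plan is to reduce both equalities to the commutativity of \eqref{eq:arbitrary} on ``single-legged'' (co)spans, by exploiting the fact that every (co)span factors in the appropriate category as a composite of two such pieces. This is the natural strategy, since \eqref{eq:arbitrary} directly constrains $\Psi$ and $\Phi$ only on the images of the two injections $\subc, \op\subc \hookrightarrow \subc\+{\subc}\op\subc$.

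For the first assertion, let $\tr{p}\tl{q}$ be a cospan in $\subc$ and let $\tl{f}\tr{g}$ be its pullback in $\catC$. Stability of $\subc$ under pullback (Assumption~\ref{ass:thcorelations}) ensures $f, g \in \subc$, so $\tl{f}\tr{g}$ is a morphism of $\Spanc\subc$; moreover, this pullback square exhibits $\tl{f}\tr{g}$ as the composite $(\tl{\id}\tr{p}) \poi (\tl{q}\tr{\id})$ in $\Spanc\subc$. Dually, in $\Cospan\catC$ we have the evident identity $\tr{p}\tl{q} = (\tr{p}\tl{\id}) \poi (\tr{\id}\tl{q})$. Chaining functoriality of $\Phi$, the commutativity of \eqref{eq:arbitrary} applied to $p$ (viewed in $\subc$) and $q$ (viewed in $\op\subc$), and functoriality of $\Psi$ then yields
\[
\Phi(\tr{p}\tl{q}) \ = \ \Phi(\tr{p}\tl{\id}) \poi \Phi(\tr{\id}\tl{q}) \ = \ \Psi(\tl{\id}\tr{p}) \poi \Psi(\tl{q}\tr{\id}) \ = \ \Psi(\tl{f}\tr{g}).
\]

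The dual assertion follows from the symmetric argument with span and cospan exchanged: I would factor $\tl{f}\tr{g}$ as $(\tl{f}\tr{\id}) \poi (\tl{\id}\tr{g})$ in $\Spanc\subc$, transport each one-legged piece across \eqref{eq:arbitrary} to $\Cospan\catC$, recombine via functoriality of $\Phi$, and observe that the resulting composite in $\Cospan\catC$ is by definition the pushout of $\tl{f}\tr{g}$, namely $\tr{p}\tl{q}$. This direction is even a touch cleaner, since pushouts are taken in $\catC$ and no stability hypothesis on $\subc$ is needed. The only real subtlety throughout is the one addressed in the first paragraph: one must explicitly invoke stability under pullback to know that the pullback of a $\subc$-cospan (computed in $\catC$) lives in $\Spanc\subc$ and realises the composition there; once that is in hand, the rest is routine bookkeeping with functoriality and the outer commutative square.
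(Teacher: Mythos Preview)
Your proof is correct and is essentially the same argument as the paper's, just spelled out more explicitly: the paper chases the single zigzag $\tr{p}\tl{q}$ (resp.\ $\tl{f}\tr{g}$) as a morphism of $\subc\+{\subc}\op{\subc}$ around both sides of \eqref{eq:arbitrary}, which amounts precisely to your factorisation into one-legged pieces and recombination via functoriality. Your explicit invocation of pullback-stability of $\subc$ to ensure $\tl{f}\tr{g}$ lives in $\Spanc{\subc}$ is a point the paper leaves implicit.
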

\begin{proof}
  Consider $\tr{p}\tl{q} \in \subc \+{\subc} \op{\subc}$.  Its image in
  $\mathcal X$ via the lower left corner of the commutative square
  \eqref{eq:arbitrary} is $\Phi(\tr{p}\tl{q})$ while, recalling that
  $\tr{p}\tl{q}  \in \subc \+{\subc} \op{\subc}$ is mapped to $\tl{f}\tr{g}$ in $\Span{\subc}$, its image via the upper right
  corner is $\Psi(\tl{f}\tr{g})$. Thus $\Phi(\tr{p}\tl{q}) = \Psi(\tl{f}\tr{g})$.

  The second claim is analogous, beginning instead with the span
  $\tl{f}\tr{g}$.
\end{proof}

\begin{lem} \label{lemma:functordescendstocorel}
  If $\tr{p_1}\tl{q_1}$ and $\tr{p_2}\tl{q_2}$
  are cospans in $\catC$ such that $\CospanToCorel(\tr{p_1}
  \tl{q_1})=\CospanToCorel(\tr{p_2} \tl{q_2})$, then $\Phi(\tr{p_1} \tl{q_1})
  = \Phi(\tr{p_2} \tl{q_2})$.
\end{lem}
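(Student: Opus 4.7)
The plan is to reduce the claim to a single generating step of the equivalence relation defining corelations, and then apply Lemma~\ref{lemma:arbitraryCommutativeDiag} to a carefully chosen ``witness'' cospan. By Definition~\ref{def:corel}, the equality $\CospanToCorel(\tr{p_1}\tl{q_1}) = \CospanToCorel(\tr{p_2}\tl{q_2})$ is the symmetric, transitive closure of the one-step relation $X \tr{f} N \tl{g} Y \sim X \tr{f \poi m} N' \tl{g \poi m} Y$ witnessed by an $m \in \Cmono$. Induction along this closure reduces the problem to showing the lemma when $\tr{p_2}\tl{q_2}$ equals $X \tr{p_1 \poi m} N_2 \tl{q_1 \poi m} Y$ for some $m \maps N_1 \to N_2$ in $\Cmono$.

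For this generating case, I would exploit the parallel decompositions in $\Cospan{\catC}$:
\begin{align*}
\tr{p_1}\tl{q_1} \; &= \; (X \tr{p_1} N_1 \tl{\id} N_1) \poi (N_1 \tr{\id} N_1 \tl{\id} N_1) \poi (N_1 \tr{\id} N_1 \tl{q_1} Y), \\
\tr{p_1\poi m}\tl{q_1\poi m} \; &= \; (X \tr{p_1} N_1 \tl{\id} N_1) \poi (N_1 \tr{m} N_2 \tl{m} N_1) \poi (N_1 \tr{\id} N_1 \tl{q_1} Y),
\end{align*}
both of which are verified by direct computation of pushouts: the identity legs force every pushout involved to be trivial. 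Hence by functoriality of $\Phi$, it suffices to show that $\Phi$ sends the two \emph{middle} cospans to the same morphism of $\mathcal{X}$.

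The identity middle cospan is sent to $\id_{N_1}$ because $\Phi$ is identity-on-objects. For the witness cospan $N_1 \tr{m} N_2 \tl{m} N_1$, I invoke Lemma~\ref{lemma:arbitraryCommutativeDiag}: this cospan lies in $\subc$ since $\Cmono \subseteq \subc$ by Assumption~\ref{ass:thcorelations}, and since $m$ is monic (because $\Cmono$ consists of monos by the same assumption), its pullback in $\catC$ is precisely the identity span $N_1 \tl{\id} N_1 \tr{\id} N_1$. Therefore $\Phi(N_1 \tr{m} N_2 \tl{m} N_1) = \Psi(N_1 \tl{\id} N_1 \tr{\id} N_1) = \id_{N_1}$, which matches $\Phi$ of the identity cospan. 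Functoriality of $\Phi$ then yields the generating-step equality, and the inductive closure finishes the lemma.

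The main obstacle I anticipate is essentially bookkeeping: verifying that the two three-fold decompositions actually evaluate to the claimed cospans upon carrying out the pushouts in $\Cospan{\catC}$. Once these decompositions are in hand, the genuine content of the argument is the identification of the witness cospan with an identity in $\mathcal{X}$, and that rests squarely on the mono-ness of $m$ (which trivialises its self-pullback) together with Lemma~\ref{lemma:arbitraryCommutativeDiag}.
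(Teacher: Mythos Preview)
Your proof is correct and follows essentially the same strategy as the paper: decompose each cospan so that a factor $\tr{m}\tl{m}$ with $m \in \Cmono$ appears in the middle, and then use Lemma~\ref{lemma:arbitraryCommutativeDiag} together with the fact that the pullback of $\tr{m}\tl{m}$ is the identity span to conclude $\Phi(\tr{m}\tl{m}) = \id$. The only difference is bookkeeping: you induct along the generating steps of the equivalence relation in Definition~\ref{def:corel}, whereas the paper instead invokes Proposition~\ref{lemma:charCospanToCorel} to pass directly to the common jointly-in-$\Cepi$ representative $\tr{p}\tl{q}$, writing $\tr{p_i}\tl{q_i} = \tr{p}\tr{m_i}\tl{m_i}\tl{q}$ for $i=1,2$ and collapsing both sides to $\Phi(\tr{p}\tl{q})$ in one shot.
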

\begin{proof}
Suppose $\CospanToCorel(\tr{p_1}\tl{q_1}) =
\CospanToCorel(\tr{p_2}\tl{q_2})$ as per hypothesis. Then by Proposition \ref{lemma:charCospanToCorel} there exists
$\tr{m_1},\tr{m_2} \in \Cmono$ and $\tr{p}\tl{q} \in \Cospan\catC$ such that
\[
  \tr{p_1}\tl{q_1} \ = \ \tr{p}\tr{m_1}\tl{m_1}\tl{q} \quad\mbox{and}\quad
\tr{p_2}\tl{q_2} \ = \ \tr{p}\tr{m_2}\tl{m_2}\tl{q}.
\]
Then
\begin{eqnarray*}
\Phi(\tr{p_1}\tl{q_1}) &=& \Phi(\tr{p}\tr{m_1}\tl{m_1}\tl{q})
\\ &=& \Phi(\tr{p}\tl{id})\poi\Phi(\tr{m_1}\tl{m_1})\poi\Phi(\tr{id}\tl{q})
\\ &\overset{(\clubsuit)}=&
\Phi(\tr{p}\tl{id})\poi\Psi(\tl{\id}\tr{\id})\poi\Phi(\tr{\id}\tl{q})
 \\  &=& \Phi(\tr{p}\tl{id})\poi\Phi(\tr{\id}\tl{q})
 \\  &=& \Phi(\tr{p}\tl{q}),
\end{eqnarray*}
and similarly for $\tr{p_2}\tl{q_2}$.
The equality $(\clubsuit)$ holds because, by Assumption~\ref{ass:thcorelations}, $\Cmono \subseteq \subc$ and $\tr{m_1 \in \Cmono}$ is mono, thus the pullback of
$\tr{m_1}\tl{m_1}$ is $\tl{\id}\tr{\id}$ and via Lemma~\ref{lemma:arbitraryCommutativeDiag} $\Phi(\tr{m_1}\tl{m_1}) = \Psi(\tl{\id}\tr{\id})$. \end{proof}

\begin{proof}[Proof of Theorem~\ref{thm:corelations}]
Suppose we have a commutative
diagram \eqref{eq:arbitrary}. It suffices to show that there exists a
functor $\theta \colon \Corel{\catC}\to\mathcal{X}$ with
$\theta\CospanToCorel=\Phi$ and $\theta\SpanToCorel =
\Psi$. Uniqueness is automatic by fullness (Proposition~\ref{prop:CospanToCorelFull})
and bijectivity on objects of~$\CospanToCorel$.

Given a corelation $a$, fullness yields a cospan $\tr{f}\tl{g}$ such that $\CospanToCorel(\tr{f}\tl{g}) = a$. We then define $\theta(a) = \Phi(\tr{f}\tl{g})$. This is well-defined by Lemma \ref{lemma:functordescendstocorel}.

For commutativity, clearly $\theta\CospanToCorel = \Phi$.  Moreover, $\theta\SpanToCorel =
\Psi$: given a span $\tl{f}\tr{g}$ in $\Cmono$, let $\tr{p}\tl{q}$ be its
pushout span in $\catC$. Thus by
Lemma~\ref{lemma:arbitraryCommutativeDiag},
\[
  \Psi(\tl{f}\tr{g}) = \Phi(\tr{p}\tl{q})
  =\theta\CospanToCorel(\tr{p}\tl{q})=\theta\SpanToCorel(\tl{f}\tr{g}). \qedhere
\]
\end{proof}

\subsection{Proof of Theorem~\ref{thm:corelationsPROPs}}\label{app:spansPROP}
We first discuss how to put monoidal structures on $\Cospan{\catC}$,
$\Corel{\catC}$, and $\Spanc{\subc}$, and show that $\CospanToCorel$ and
$\SpanToCorel$ are prop morphisms in this case.

  \begin{prop}
    Let $(\catC,\tns)$ be a prop with pullbacks, and let $\subc$ be a sub-prop
    of $\catC$ stable under pullback. If $\tns$ preserves pullbacks in $\subc$,
    then $(\Spanc{\subc},\tns)$ is a prop.
  \end{prop}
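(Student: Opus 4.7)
The plan is to lift the monoidal structure of $\catC$ to $\Spanc\subc$ and verify the prop axioms. On objects take $n\tns m := n+m$ (inherited from $\catC$). Given spans $X \tl{f} A \tr{g} Y$ and $X' \tl{f'} A' \tr{g'} Y'$ with legs in $\subc$, define their monoidal product to be the span
\[
X+X' \;\tl{f\tns f'}\; A\tns A' \;\tr{g\tns g'}\; Y+Y'.
\]
Since $\subc$ is a sub-prop, it is closed under $\tns$, so the new legs lie in $\subc$. Functoriality of $\tns$ in $\catC$ applied to an isomorphism of apices shows that this definition descends to isomorphism classes of spans and preserves identity spans.

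The key step is showing that $\tns$ preserves composition in $\Spanc\subc$. Consider composable spans $X \tl{f} A \tr{g} Y \tl{h} B \tr{k} Z$ and $X' \tl{f'} A' \tr{g'} Y' \tl{h'} B' \tr{k'} Z'$ with legs in $\subc$; their composites are obtained as the pullbacks $A\times_Y B$ and $A'\times_{Y'}B'$ in $\catC$, whose legs lie in $\subc$ by stability under pullback. Comparing the monoidal product of the composites with the composite of the monoidal products then reduces to the canonical isomorphism
\[
(A \times_Y B)\tns(A'\times_{Y'}B') \;\cong\; (A\tns A')\times_{Y\tns Y'}(B\tns B'),
\]
which is precisely the pullback-preservation hypothesis. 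The outer legs match by construction, so the two composite spans are canonically isomorphic.

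For the structural isomorphisms (associators, unitors, symmetries $\sigma_{n,m}\maps n+m \to m+n$), use the ioo functor $J\maps \subc \to \Spanc\subc$ sending $f$ to $(\tl{\id}\tr{f})$. Since $\subc$ is itself a sub-prop, its structural maps are available; their $J$-images give the corresponding structural isomorphisms in $\Spanc\subc$ (the inverse of $J(\sigma)$ is $J(\sigma^{-1})$), and the naturality squares and coherence diagrams transfer from $\subc$ to $\Spanc\subc$ because $J$ is a functor and $\tns$ on $\Spanc\subc$ extends $\tns$ on $\subc$ via $J$. Strictness $n\tns m = n+m$ is immediate by definition, completing the prop structure.

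The main obstacle is the functoriality verification, where the exchange of pullback and monoidal product is required; this is exactly the assumption that $\tns$ preserves pullbacks in $\subc$. The remaining verifications (well-definedness on isomorphism classes, identities, and coherence) follow routinely by transporting the corresponding properties of $\catC$ through the identity-on-objects embedding $J$.
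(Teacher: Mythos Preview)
Your argument is correct and follows the same approach as the paper: the crux is the functoriality of $\tns$ on $\Spanc\subc$, which reduces to the interchange isomorphism $(A\times_Y B)\tns(A'\times_{Y'}B') \cong (A\tns A')\times_{Y\tns Y'}(B\tns B')$, i.e.\ exactly the pullback-preservation hypothesis. The paper's proof records only this key step, while you additionally spell out well-definedness on isomorphism classes and the transport of coherence data via the embedding $J$; these are routine checks the paper leaves implicit.
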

  \begin{proof}
    We need to show the map
    \[
      \tns\maps \Spanc{\subc} \times \Spanc{\subc} \longrightarrow \Spanc{\subc}
    \]
    is functorial. That is, given two pairs $(X \leftarrow N \to Y,\:X'
    \leftarrow N' \to Y')$ and $(Y \leftarrow M \to Z,\: Y' \leftarrow M' \to
    Z')$ of spans in $\subc$, we need to show that the composite of their images
    under $\tns$:
    \[
      X\tns X' \longleftarrow (N\tns N') \times_{Y\tns Y'} (M\tns M') \longrightarrow Z \tns Z'
    \]
    is isomorphic to the image under $\tns$ of their composite:
    \[
      X \tns X' \longleftarrow (N\times_YM) \tns (N'\times_{Y'}M') \longrightarrow Z \tns Z'.
    \]
    This is precisely the hypothesis that pullbacks commute with $\oplus$ in
    $\subc$.
  \end{proof}
 
  Note that dualising the above argument with $\subc = \catC$ yields the fact
  that $(\Cospan{\catC},\tns)$ is a prop whenever $\tns$ preseves pushouts.
  Also note that the inclusions $\subc \to \Spanc{\subc}$ and $\op{\subc} \to
  \Spanc{\subc}$ are prop functors.

\begin{prop}
  If $\catC$ is a prop with a costable factorisation system, and $\Cmono$ is
  closed under $\tns$, then $\Corel{\catC}$ is a prop. Moreover, the quotient
  functor 
  \[
    \CospanToCorel\maps \Cospan{\catC} \to \Corel{\catC}
  \]
  is a prop morphism.
\end{prop}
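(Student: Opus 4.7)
The plan is to promote the prop structure of $\Cospan{\catC}$ to the quotient $\Corel{\catC}$ by showing that corelation equivalence is a monoidal congruence. As noted immediately after the preceding proposition, dualising that result gives $(\Cospan{\catC}, \tns)$ a prop structure (using the pushout-preservation hypothesis standing in this section). Since $\CospanToCorel$ is identity on objects and surjective on morphisms, once we check that $\tns$ respects corelation equivalence on cospans, the symmetric monoidal axioms and the strict equation $n \tns m = n+m$ transport automatically from $\Cospan{\catC}$; the resulting structure on $\Corel{\catC}$ is a prop for which $\CospanToCorel$ is, by construction, a strict symmetric monoidal identity-on-objects functor, i.e.\ a prop morphism.

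The one nontrivial check is therefore the monoidal congruence: if cospans $X\tr{f}N\tl{g}Y$ and $X\tr{f'}N'\tl{g'}Y$ represent the same corelation, and $X'\tr{h}M\tl{k}Y'$ is any cospan, then the tensored cospans $\tr{f\tns h}\tl{g\tns k}$ and $\tr{f'\tns h}\tl{g'\tns k}$ should also represent the same corelation (and symmetrically on the other factor). By Definition~\ref{def:corel}, corelation equivalence is the symmetric, transitive closure of the single-step relation mediated by a map in $\Cmono$, so it suffices to treat one step. Suppose $\tr{m \in \Cmono}$ witnesses equivalence between the first two cospans via diagram~\eqref{eq:defcorelations}. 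Then $m \tns \id_M \colon N\tns M \to N'\tns M$ lies in $\Cmono$, because $\id_M$ is an isomorphism (hence in $\Cmono$) and $\Cmono$ is closed under $\tns$ by hypothesis. Bifunctoriality of $\tns$ yields $(f \tns h)\poi(m\tns\id_M) = (f\poi m)\tns h = f'\tns h$, and similarly for the $Y$-leg, so $m\tns\id_M$ is a valid witness.

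There is no real obstacle beyond this bookkeeping: the symmetric case, varying the second factor while holding the first fixed, is handled by the same argument with the roles of identity and $m$ swapped. Once congruence is established, the assignment $[\tr{f}\tl{g}] \tns [\tr{h}\tl{k}] := [\tr{f\tns h}\tl{g\tns k}]$ gives a well-defined bifunctor on $\Corel{\catC}$, and the equation $\CospanToCorel(\alpha \tns \beta) = \CospanToCorel(\alpha) \tns \CospanToCorel(\beta)$ holds by construction, completing the proof.
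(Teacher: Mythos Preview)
Your proof is correct and follows essentially the same strategy as the paper: define $\tns$ on $\Corel{\catC}$ by tensoring representative cospans, and verify well-definedness using that $\Cmono$ is closed under $\tns$. The only cosmetic difference is that the paper checks both tensor factors simultaneously via a witness $m_1 \tns m_2$, whereas you do one factor at a time using $m \tns \id_M$; your treatment of the symmetric--transitive closure is slightly more explicit, but the content is the same.
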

\begin{proof}
  The first task is to show that $\Corel{\catC}$ is indeed a prop. We show that
  $\tns$ induces a monoidal product, which we shall also write $\tns$, on
  $\Corel{\catC}$. Given two corelations $a$ and $b$, with representatives
  $\tr{f}\tl{g}$ and $\tr{h}\tl{k}$ we define their monoidal product $a \tns b$
  to be the corelation represented by the cospan $\tr{f\tns h} \tl{g \tns k}$.
  This is well defined: given $\tr{f'}\tl{g'}$, $\tr{h'}\tl{k'}$ and $m_1,m_2$
  in $\Cmono$ such that $f'= f;m_1$, $g' = g;m_1$, $h'=h;m_2$, $k' = k;m_2$, the
  monoidality of $\tns$ in $\catC$ implies $f'\tns g' = (f\tns g);(m_1 \tns
  m_2)$ and $h'\tns k' = (h\tns k);(m_1 \tns m_2)$. Since $\Cmono$ is closed
  under $\tns$, $m_1 \oplus m_2$ again lies in $\tns$, and the product
  corelation is independent of choice of representatives.
  
  As prop morphisms are strict monoidal functors, to show that $\CospanToCorel$
  is a prop morphism we just need to check $\CospanToCorel(a \tns b)
  =\CospanToCorel a \tns \CospanToCorel b$, where $a$ and $b$ are cospans. This
  follows immediately from the definition: the monoidal product of the
  corelations that two cospans represent is by definition the corelation
  represented by the monoidal product of the two cospans.
\end{proof}

\begin{prop}
  $\SpanToCorel\maps \Spanc{\subc} \to \Corel{\catC}$ is a prop morphism.
\end{prop}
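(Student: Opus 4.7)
The plan is to build on Proposition~\ref{prop:spantocorel}, which already establishes that $\SpanToCorel$ is an identity-on-objects functor. Thus only strict preservation of the monoidal product on morphisms remains. Concretely, the goal is to show, for any two spans $X \tl{f} N \tr{g} Y$ and $X' \tl{f'} N' \tr{g'} Y'$ in $\subc$, that
\[
\SpanToCorel\bigl(\tl{f \tns f'}\tr{g \tns g'}\bigr) \;=\; \SpanToCorel(\tl{f}\tr{g}) \tns \SpanToCorel(\tl{f'}\tr{g'}).
\]

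First I would observe that the explicit ``project to the jointly-in-$\Cepi$ part'' step in the definition of $\SpanToCorel$ is invisible at the level of corelation classes: by Proposition~\ref{lemma:charCospanToCorel}, a cospan and its jointly-in-$\Cepi$ factor represent the same equivalence class in $\Corel{\catC}$. Hence, for purposes of the equality above, $\SpanToCorel(\tl{f}\tr{g})$ may be computed simply as the corelation class of the pushout cospan of $\tl{f}\tr{g}$ taken in $\catC$. Write these pushouts as $\tr{p}\tl{q}$ and $\tr{p'}\tl{q'}$ respectively.

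Next I would invoke the pushout-preservation hypothesis of the theorem: since $\tns$ preserves pushouts in $\catC$, the pushout of $\tl{f \tns f'}\tr{g \tns g'}$ in $\catC$ is, up to isomorphism, $\tr{p \tns p'}\tl{q \tns q'}$. Combining this with the description of $\tns$ on $\Corel{\catC}$ from the previous proposition---where the monoidal product acts on representatives by the ambient $\tns$---gives
\[
\SpanToCorel(\tl{f \tns f'}\tr{g \tns g'}) \;=\; [\tr{p \tns p'}\tl{q \tns q'}] \;=\; [\tr{p}\tl{q}] \tns [\tr{p'}\tl{q'}] \;=\; \SpanToCorel(\tl{f}\tr{g}) \tns \SpanToCorel(\tl{f'}\tr{g'}),
\]
as required. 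The only mild subtlety, and the step that one might expect to be the obstacle, is that the input spans live in $\subc$ whereas the preservation hypothesis is stated for pushouts in $\catC$; this is harmless because $\subc \subseteq \catC$ and $\SpanToCorel$ by construction takes its pushouts in $\catC$, so the hypothesis applies verbatim and no analogous well-definedness condition for $\subc$ is needed.
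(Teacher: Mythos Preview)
Your proof is correct and follows essentially the same route as the paper: both reduce to checking that $\tns$ commutes with the pushout used to define $\SpanToCorel$, invoking the hypothesis that the monoidal product preserves pushouts in $\catC$. Your explicit remark that the ``jointly-in-$\Cepi$'' projection step is invisible at the level of corelation classes (via Proposition~\ref{lemma:charCospanToCorel}) is a point the paper leaves implicit when it speaks of the cospans ``representing'' the corelations, but otherwise the arguments coincide.
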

\begin{proof}
  Again, we just need to check that $\SpanToCorel(a\tns b) = \SpanToCorel a \tns
  \SpanToCorel b$. For this we need that the monoidal product preserves pushouts. 
  Indeed, given spans $a = X \tl{f} N \tr{g} Y$ and $b = X' \tl{f'} N'
  \tr{g'} Y'$, we have $\SpanToCorel(a\tns b)$ represented by the cospan 
  \[
    X\tns X' \longrightarrow (X\tns X')+_{(N \tns N')}(Y\tns Y')
    \longleftarrow Y \tns Y',
  \]
  and $\SpanToCorel a \tns \SpanToCorel b$ represented by the cospan
  \[
    X\tns X' \longrightarrow (X+_{N}Y)\tns (X'+_{N'}Y')
    \longleftarrow Y \tns Y'.
  \]
  These cospans are isomorphic by the fact $\tns$ preserves pushouts, and
  hence represent the same corelation.
\end{proof}

Now having described how to interpret \eqref{eq:pushoutCorel} in the category of
props, it remains to show that it is a pushout.

\begin{proof}[Proof of Theorem \ref{thm:corelationsPROPs}]
  From Theorem~\ref{thm:corelations}, we know that \eqref{eq:pushoutCorel}
  commutes, and that given some other cocone
  \[
    \raise15pt\hbox{$
      \xymatrix@C=40pt{
	{\subc \+{\catC} \op{\subc}} \ar[r] \ar[d] & {\Spanc{\subc}} \ar[d]^{\Psi} \\
	{\Cospan{\catC}} \ar[r]_-{\Phi} & {\mathcal{X}},
      }$}
  \]
  there exists a unique functor $\theta$ from $\Corel{\catC}$ to $\mathcal{X}$.
  All we need do here is check that $\theta$ is a prop functor.

  Suppose we have corelations $a$ and $b$, and write $\tilde a$ and $\tilde b$
  for cospans that represent them. Recall that by definition $\theta a = \Phi
  \tilde a$. Then the strict monoidality of $\Phi$ gives
  \[
    \theta(a\tns b) = \Phi(\tilde a \tns \tilde b) = \Phi \tilde a \tns
    \Phi \tilde b = \theta a \tns \theta b.
  \]
  This proves the theorem.
\end{proof}

\begin{proof}[Proof of Corollary \ref{thm:corelationsAbPROP}]
  Note that in an abelian prop the biproduct, being both a product and a
  coproduct, preserves both pushouts and pullbacks, and that the monos are
  closed under the biproduct. Thus we can apply
  Theorem \ref{thm:corelationsPROPs}.
\end{proof}

Finally, note that the above arguments, with the routine care paid to
coherence maps, extend easily to the more general case of symmetric monoidal
categories. In this case the pushout square is a pushout in both the category of
symmetric monoidal categories and lax symmetric monoidal functors, and as well
as the category of symmetric monoidal categories and strict monoidal functors.

\end{document}